\DeclareMathOperator\E{\mathbb{E}}
\DeclareMathOperator\Cov{Cov}
\DeclareMathOperator\Pro{\mathbb{P}}
\DeclareMathOperator\FDR{FDR}
\DeclareMathOperator\Pwr{Power}
\def\cF{\mathcal{F}}
\def\cH{\mathcal{H}}
\def\bbR{\mathbb{R}}
\def\ep{\varepsilon}
\def\what{\widehat}
\newcommand{\tcr}{\textcolor{red}}
\newlength{\dhatheight}
\newtheorem{thm}{Theorem}
\newtheorem{lem}{Lemma}
\newtheorem{con}{Condition}
\newtheorem{rem}{Remark}
\newtheorem{defi}{Definition}
\newtheorem{ex}{Example}
\newtheorem{prop}{Proposition}
\newtheorem{proc}{Procedure}
\def\section{\@startsection {section}{1}{\z@}{-3.5ex plus -1ex minus-.2ex}{2.3ex plus .2ex}{\large\bf}}
\def\subsection{\@startsection {subsection}{1}{\z@}{-3.5ex plus -1ex minus-.2ex}{2.3ex plus .2ex}{\normalsize\bf}}
\title{\textbf{Robust Reproducible Network Exploration}}
\author[$*$]{\textsc{Masaki Toyoda}
}
\author[$\dagger$]{\textsc{Yoshimasa Uematsu}
}
\affil[*]{\textit{Department of Economics, Hitotsubashi University}}
\affil[$\dagger$]{\textit{Department of Social Data Science, Hitotsubashi University}}
\begin{document}

\renewcommand{\theequation}{\thesection.\arabic{equation}}
\makeatletter
\@addtoreset{equation}{section}
\makeatother

\maketitle

\begin{abstract}
We propose a novel methodology for discovering the presence of relationships realized as binary time series  between variables in high dimension. To make it visually intuitive, we regard the existence of a relationship as an edge connection, and call a collection of such edges a network. Our objective is thus rephrased as uncovering the network by selecting relevant edges, referred to as the network exploration. Our methodology is based on multiple testing for the presence or absence of each edge, designed to ensure statistical reproducibility via controlling the false discovery rate (FDR). In particular, we carefully construct $p$-variables, and apply the Benjamini-Hochberg (BH) procedure. We show that the BH with our $p$-variables controls the FDR under arbitrary dependence structure with any sample size and dimension, and has asymptotic power one under mild conditions. The validity is also confirmed by simulations and a real data example.
\end{abstract}
\textbf{Keywords.} Exploratory data analysis, Multiple test, False discovery rate control, Power, E-value.

\section{Introduction}

Science advances through the iterative process of hypothesis formulation and confirmation. In this process, traditional statistics has played a crucial role in \textit{confirmatory data analysis} (the upper box in Figure \ref{fig:DS}). This involves collecting data according to hypotheses or models derived from theoretical frameworks of interest, and then testing the validity of those hypotheses or models against the data. The empirical results are then fed back into the theories and hypotheses. However, with the recent accumulation of big data, the role of statistics has expanded to include the formulation of hypotheses (the lower box in Figure \ref{fig:DS}). In particular, the importance of \textit{exploratory data analysis} \citep{tukey1962future} for discovering key variables and relationships hidden in high-dimensional big data has increased. The variables and hypotheses discovered to be important contribute directly (Route \textsf{A} in Figure \ref{fig:DS}), or after further confirmatory data analysis (Route \textsf{B} in Figure \ref{fig:DS}), to the construction of new theories. This data-driven theory construction, often referred to as \textit{data science}, aligns with the views of \cite{donoho201750} and \cite{nosek2018preregistration}.

In this article, we focus on exploratory data analysis, particularly proposing a methodology for discovering the presence of ``relationships'' between variables within large datasets. In our framework, any relationship can be considered without any assumption as long as its existence is realized as binary time series. For example, in SNS data, a relationship between a pair of users exists if they are likely to send messages each other over a period. In another example of traffic data, a relationship between two intersections exists if the traffic volume between them exceeds a certain threshold. For the sake of visual clarity, we regard the existence of a relationship between two variables as an edge connecting the two nodes, and call a collection of such edges a network. 
Our objective is thus rephrased as stably uncovering the network hidden within the big data by selecting relevant edges. We call this the \textit{network exploration}. Although not within the scope of this paper, the discovered networks may be subject to further confirmatory analysis as necessary.

\begin{figure}[!t]
\centering
\includegraphics[width=13cm]{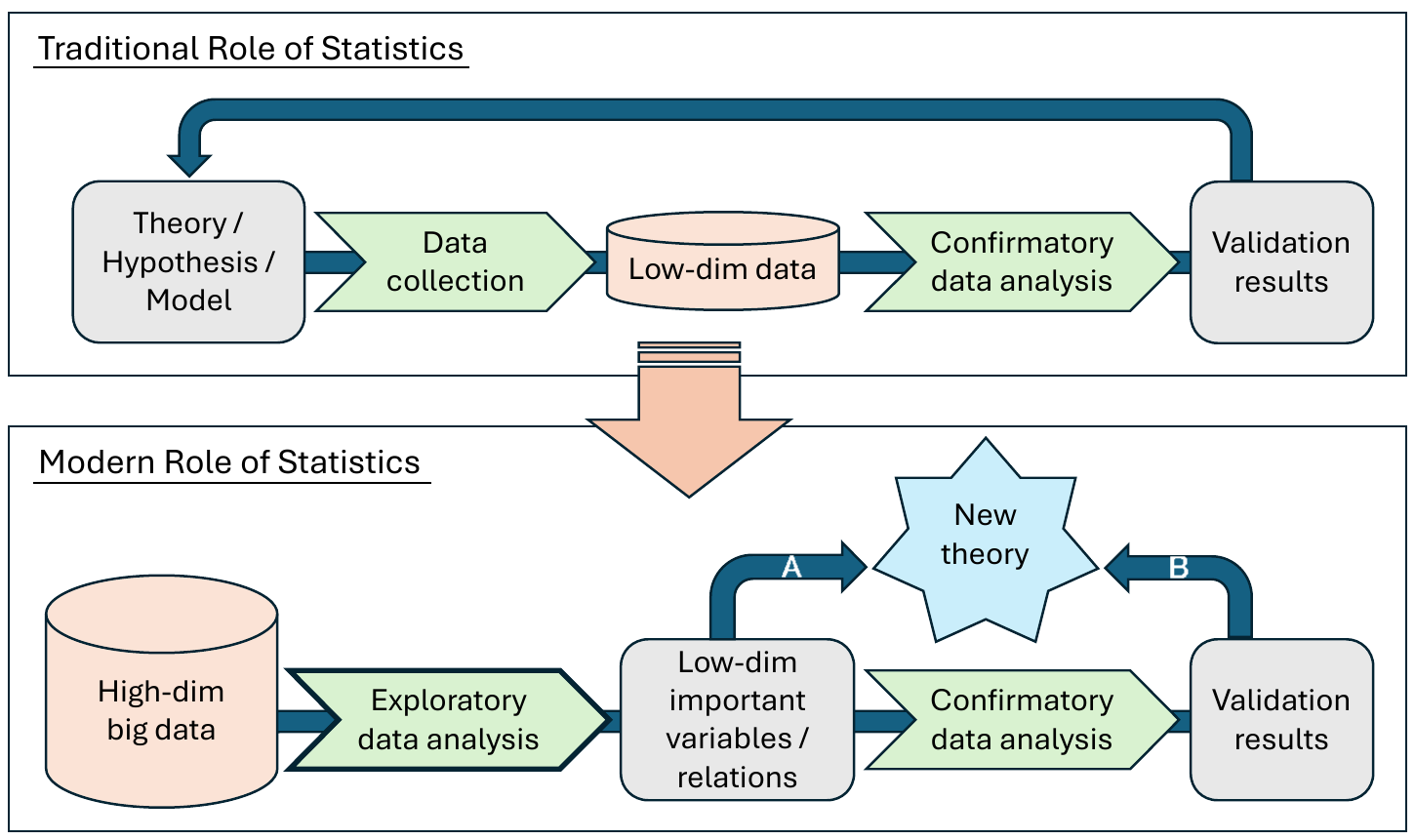}
\caption{The evolution of the Role of Statistics after big data. Our focus is on the exploratory data analysis to determine whether important ``relationships'' exist between variables in high-dimensional data. The discoveries can contribute directly (Route \textsf{A}), or after further confirmatory data analysis (Route \textsf{B}), to building new theories, if they are reproducible. }\label{fig:DS}
\end{figure}

Our methodology for network exploration is based on multiple testing for the presence or absence of each edge. Importantly, our framework is designed to ensure statistical reproducibility. While exploratory data analysis may not always require reproducibility, especially when followed by a more refined confirmatory analysis (Route \textsf{B} in Figure \ref{fig:DS}), reproducibility becomes essential when its results are used directly to construct new theories (Route \textsf{A} in Figure \ref{fig:DS}). 
A crucial aspect of our methodology for reproducible network exploration is to control the \textit{false discovery rate} (FDR) of edge discovery, which will contribute stability in network detection. 
This together with high power for selecting significant edges ensures reproducibility in network detection, thereby providing a foundation for theory building. In particular, we expect it to be valuable in fields where experimentation is challenging, such as economics and other social sciences.

We adopt the method of \cite{benjamini1995controlling}, called the BH, for edge selection in pursuit of the FDR control. Whether this works well or not depends on the construction of $p$-variables for each hypothesis. In general, the BH does not guarantee the FDR control unless the $p$-variables satisfy the property, \textit{positive regression dependence on the subset of nulls} (PRDS); see \cite{benjamini2001control, FinnerEtAl2009, RamdasEtAl2019unified}. However, the condition is difficult to verify for a given set of $p$-variables. As our methodological contribution, we construct two kinds of $p$-variables for a robust discovery against such dependencies. The first one is based on $e$-variables, which are non-negative random variables taking their expectations less than or equal to one under the null \citep{vovk2021values}. It is known that the reciprocals of $e$-variables become $p$-variables. The BH that uses such $p$-variables is called the e-BH, and is known to be robust against any dependence structure. Our $e$-based $p$-variable can utilize domain knowledge or beliefs, leading to power enhancement. The second construction is based on misspecification. Our null hypotheses of the edge detection problem are complicated enough not to directly form $p$-variables. We then deliberately misspecify each null to be the null closest to the alternatives, which still yields proper $p$-variables. Although they have rather complicated dependence, \cite{benjamini2001control} procedure, called the BY, helps perform robust inference.

In terms of statistical theory, we prove that the BH with our $e$-based $p$-variables and the BY with our misspecification-based $p$-variables control the FDR of edge discoveries less than or equal to a preassigned level even under arbitrary dependence structure. Moreover, under a set of mild conditions, we also show that their power tends to unity. These theoretical results will bring statistical reproducibility in the network exploration, and hence the obtained network can directly be a basis for a new theory via Route \textsf{A}.   We demonstrate the performances of the BH with our $p$-variables by numerical simulations and empirical example with daily SNS data. 

Our study appears to offer a unique level of novelty, as we have not found papers that take a similar approach to exploring the presence or absence of relationships (i.e., network exploration) with FDR control. That being said, we will highlight some related works. We can find papers considering estimation of network models, including TERGM \citep{hanneke2010discrete, krivitsky2014separable}, rewiring Markov chain \citep{crane2015time}, and supposing that similar nodes are likely to be connected \citep{liben2003link, pandey2019comprehensive}. To model static networks, independently drawn edges are often assumed; this category includes the stochastic block models  \citep{holland1983stochastic} and its variants \citep{airoldi2008mixed, karrer2011stochastic}. There are also works aiming to detect some specific relationships defined as the covariance or higher moments \citep{de2004discovery, faith2007large, bien2011sparse, yu2016sparse}. 

The remaining part of this paper is organized as follows. 
Section \ref{sec:preliminaries} introduces the formal definition of edges and networks, and formulates the edge detection problem as a multiple testing. Section \ref{sec:Methodology} proposes our methodology for detecting edges. Specifically, we propose two sorts of $p$-variables that are used in the BH (or BY). Section \ref{sec:Theory} investigates the theoretical properties of the proposed method. First we claim that our $p$-variables are reasonable, and then prove the FDR control and power guarantee. Section \ref{sec:Simulations} demonstrates through simulations that the proposed method performs well in terms of both FDR and power. Section \ref{sec:Real data} applies our method to real SNS data to discover a hidden network. Section \ref{sec:Conclusion} is a conclusion. All the proofs of our theoretical results are collected in Appendix.

\section{Preliminaries}\label{sec:preliminaries}

We formulate the problem of network detection within the framework of multiple testing. 
Section \ref{subsec:network} introduces the $\pi$-connectable network, which is our target to discover. 
Section \ref{sec:mt} formulates our network detection problem as multiple testing, and reviews the statistical measures called the FDR and power. 

\subsection{Connectable networks}\label{subsec:network}

We begin with a formal definition of edges that comprise a network. As described in Introduction, we represent a presence of relationship between two variables as a connection of edge between two nodes, and whether it connects or not will be detected by a binary time series. 
More precisely, we define the $i$th \textit{potential edge} (or simply \textit{edge}) at time $t$ as a Bernoulli random variable $X_{it}$ indexed by $i\in[n]:=\{1,\dots,n\}$ and $t\in[T]$ with the parameter given by a random variable $\xi_{it}=\E[X_{it}\mid \cF_{t-1}] \in [0,1]$ a.s., where $\cF_t$ is a $\sigma$-field generated by $\{X_{is}:i\in[n],\ s\leq t\}$. We say that the $i$th edge is \textit{connected} at time $t$ if $X_{it}=1$. In our setting, we do not explicitly define what the nodes are, but instead directly index the edges. The parameter $\xi_{it}$ that reflects the past information $\cF_{t-1}$ controls the possibility of edge appearance. Remarkably, we do not require any further assumption on the dependence structures across $i\in[n]$ and $t\in[T]$, which is desirable in exploratory data analysis.


We next define edges that are likely to be connected in a certain period. An $i$th sequence of  potential edges, $\{X_{it}:t\in[T]\}$, is said to be \textit{$\pi$-connectable} up to time $T$ if the corresponding parameter sequence  $\{\xi_{it}:t\in[T]\}$ satisfies
\begin{align}
\Pro\left(\max_{t\in[T]}\xi_{it}>\pi\right)>0
\end{align}
for given $\pi\in[0,1]$. Whether the $i$th sequence of edges is $\pi$-connectable or not depends on the value of $\pi$. 
Our target of exploration, the \textit{$\pi$-connectable network} $\cH_1(\pi)$, is defined as a set of $\pi$-connectable edges up to time $T$: 
\begin{align}
\cH_1(\pi)=\left\{i\in[n]:\Pro\left(\max_{t\in[T]}\xi_{it}>\pi\right)>0\right\}.
\end{align}
In other words, $\cH_1(\pi)$ is a set of edges that are likely to be connected at least in a certain period over $[T]$. Thus it does not mean that all the members in $\cH_1(\pi)$ are always connectable over $[T]$ and beyond. Thanks to this character, our networks to be detected are interpretable even if the time series of potential edges contains structural breaks, such as a level shift and/or slope change, in $[T]$. This actually fits our goal: we are willing to do exploratory research for detecting edges worth investigating in detail. The detected edges may be investigated more in further studies. On the contrary, most of the existing works on networks aim to conduct estimation of some parametric models that are time invariant; see \cite{hanneke2010discrete} and \cite{krivitsky2014separable}, for instance. They essentially need stationarity in data. 

The predetermined parameter $\pi$ can be interpreted as the sparsity parameter in the network because $\cH_1(\pi)$ becomes sparser/denser as $\pi$ gets closer to one/zero. For example, we have $\Pro(\max_{t\in[T]}\xi_{it}>1)=0$ for all $i\in[n]$ by the definition of $\xi_{it}$, meaning that the $1$-connectable network is the empty (sparsest) set. 
A similar formulation of $\cH_1(\pi)$ can be found in, e.g., \cite{henzi2022valid} and \cite{fan2023testing}.

Our modeling framework is general enough to include many networks that have dependence across $t$ and/or $i$. We give some examples as follows. 

\begin{ex}[Independent Bernoulli]\normalfont
Suppose independence of $X_{it}$ across $t$ for each $i\in[n]$ and $\xi_{it}=\pi_i$ a.s.\ for some constant $\pi_i\in(0,1)$. Then, for each $i\in[n]$, we have $X_{it}\sim \text{i.i.d.}\ Ber(\pi_i)$, and the $\pi$-connectable network reduces to $\cH_1(\pi)=\{i\in[n]:\pi_i>\pi\}$. 
\end{ex}

\begin{ex}[Bernoulli VAR]\normalfont
Suppose that $Y_t=(Y_{1t},\dots,Y_{nt})^\top$ and $Z_t=(Z_{1t},\dots,Z_{nt})^\top$ are independently sampled from a multivariate Bernoulli distribution with joint probabilities $\pi^y=(\pi_1^y,\dots,\pi_n^y)^\top$ and $\pi^z=(\pi_1^z,\dots,\pi_n^z)^\top$. 
An $n$-dimensional binary vector series $X_t=(X_{1t},\dots,X_{nt})^\top$ is generated by the Benoulli VAR model \citep{EuanSun2020} if 
\begin{align}
    X_{it}=Y_{it}X_{i,t-1}+Z_{it}(1-X_{i,t-1}),~~~i\in[n].
\end{align}
If $\{X_t\}$ is weakly stationary, then the $\pi$-connectable network reduces to $\cH_1(\pi)=\{i\in[n]:\pi_i>\pi\}$, where $\pi_i:=\E[X_{it}]=\pi_i^z/(1-\pi_i^y+\pi_i^z)$. 
\end{ex}

\begin{ex}[Logistic model]\normalfont
Suppose that an $n$-dimensional binary vector series $X_t=(X_{1t},\dots,X_{nt})^\top$ follows the logistic model:
\begin{align}
    \xi_{it}^{\beta}:=\E[X_{it}\mid X_{t-1}]=\frac{1}{1+\exp\left\{-(\beta_{i0}+\beta_{i}^\top X_{t-1})\right\}},
\end{align}
where $(\beta_{i0},\beta_{i}^\top)^\top\in\bbR^{1+n}$ is a coefficient vector. A similar model can be found in \cite{azzalini1994logistic}. 
In this case, the $\pi$-connectable network reduces to $\cH_1(\pi)=\{i\in[n]:\Pro(\max_{t\in[T]}\xi_{it}^{\beta}>\pi)>0\}$.
\end{ex}

\subsection{FDR-controlled network exploration via multiple test}\label{sec:mt}

We formulate the network exploration as a multiple test. For given $\pi\in[0,1]$, define the set of indices of null edges by 
\begin{align}\label{H0}
\cH_0(\pi) 
= [n] \setminus \cH_1(\pi) 
= \left\{i\in[n]:\Pro\left(\max_{t\in[T]}\xi_{it}\leq \pi\right)=1\right\},
\end{align}
leading to a sequence of hypotheses:
\begin{align}\label{eq:test}
H_{0i}:i\in\cH_0(\pi)~~~\text{versus}~~~H_{1i}:i\in\cH_1(\pi)~~~\text{for each }i\in[n].
\end{align}
A multiple test for \eqref{eq:test} by some statistical procedure gives a set of rejected nulls (i.e., discovered alternatives), $\what{\cH}_1(\pi)$, which is the detected $\pi$-connectable network. The detail on the procedure is explained in Section \ref{ssec:BH}. 

Through this network exploration in \eqref{eq:test}, we aim to gain insights that will advance the development of new network theories. For this purpose, the discovered network should be statistically reproducible. Therefore, we are considering the type I error control of this multiple test by the suppression of falsely detected edges. 
The \textit{false discovery rate} (FDR), a measure of the type I error, of $\what{\cH}_1(\pi)$ for \eqref{eq:test} is defined as 
\begin{align}
\FDR(\pi)=\E\left[\frac{|\cH_0(\pi)\cap\what{\cH}_1(\pi)|}{|\what{\cH}_1(\pi)|}\right].
\end{align}
We attempt to control the FDR to be less than or equal to a pre-assigned level $\alpha\in(0,1)$. At the same time, it is desirable to configure $\what\cH_1(\pi)$ to include the members of $\cH_1(\pi)$ as much as possible. Specifically, we aim for $\what{\cH}_1(\pi)$ to have high power. Here, \textit{power} refers to the measure of avoiding a Type II error in multiple tests, and is defined as 
\begin{align}
\Pwr(\pi)=\E\left[\frac{|\cH_1(\pi)\cap\what{\cH}_1(\pi)|}{|\cH_1(\pi)|}\right].
\end{align}
A powerful FDR-controlled procedure contributes to a stable reproducible network exploration. We should note that both $\FDR(\pi)$ and $\Pwr(\pi)$ depend on $n$ and $T$, but we omit writing them explicitly.

\section{Methodology}\label{sec:Methodology}

To conduct the multiple testing for \eqref{eq:test}, we apply the procedure of \cite{benjamini1995controlling} (BH hereafter), which requires $p$-variables. In Section \ref{ssec:BH}, we formally define $p$-variables, and review the BH procedure. Section \ref{sec:e} contains our main methodological contributions; Sections \ref{subsec:e} presents valid $p$-variables via $e$-variables that can address some challenges discussed later, and Section \ref{subseccomparison} proposes another $p$-variable based on misspecification.

\subsection{The BH procedure}\label{ssec:BH}

Before reviewing the BH procedure for discovering the $\pi$-connectable network $\cH_1(\pi)$, we formally define $p$-variables since they play a crucial role in the procedure. A nonnegative random variable $P$ is called a \textit{$p$-variable} if $\Pro\left(P\leq\alpha\right)\leq\alpha$ holds for all $\alpha\in(0,1)$ under the null hypothesis. Its realization $p$ is called a \textit{$p$-value}.

We next review the BH procedure. Given target level $\alpha\in(0,1)$ and a set of $p$-variables $\{P_i\}_{i=1}^n$ along with the nulls in \eqref{eq:test}, define 
\begin{align}
K=\max\left\{i\in[n]:P_{(i)}\leq\frac{i\alpha}{n}\right\},
\end{align}
where $P_{(i)}$ is the $i$th smallest $p$-variable in $\{P_i\}_{i=1}^n$. Then the BH detects the $\pi$-connectable network by 
\begin{align}
\what\cH_1(\pi)=
\begin{cases}
\left\{ i\in[n]:P_{i}\leq P_{(K)} \right\} & \text{if $K$ exists}; \\
\emptyset & \text{otherwise}.
\end{cases}
\end{align}
In the BH procedure, construction of $p$-variables is important. As is well known in the literature, the BH procedure ensures proper FDR control when the $p$-variables are either independent or positively correlated, the latter condition being more precisely characterized by PRDS; see Introduction for references. However, it seems quite difficult to check the property in real data applications, including the network detection, due to the existence of complex dependence structures among edges. In some applications, it might be challenging to even construct $p$-variables in a usual way if the null hypothesis is complicated.

\subsection{Constructions of p-variables}\label{sec:e}

As our main methodological contribution, we propose two strategies for construction of $p$-variables; the first one is based on $e$-variables and the second is on misspecification.

\subsubsection{Robust e-based p-variables}\label{subsec:e}

The first strategy utilizes $e$-variables \citep{vovk2021values}; a nonnegative random variable $E$ is called an \textit{$e$-variable} if $\E[E]\leq1$ holds under the null hypothesis. Its realization $e$ is called an \textit{$e$-value}. 
Importantly, $1/E$ becomes a $p$-variable under the null, which is easily verified by the Markov inequality: 
$\Pro\left(1/E\leq\alpha\right)=\Pro\left(E\geq1/\alpha\right)\leq\alpha\E[E]\leq\alpha$, satisfying the definition of $p$-variables. 
Thus we can use $P_i=1/E_i$ in the BH, which is called the e-BH \citep{wang2022false}. This procedure controls the FDR under any dependence structure among $e$-variables. Because the definition of $e$-variables cares not about the tail probability but about the first moment, we may construct $e$-variables for even such a complicated hypothesis. 

Let $\{\cF_t\}_{t=1}^T$ denote a filtration, where $\cF_t$ is a $\sigma$-field generated by $\{X_{is}:i\in[n], s\leq t\}$. Then, our robust $p$-variables are constructed via $e$-variables as follows. 
\begin{proc}\normalfont\label{proc:e}
For each $i\in[n]$, perform the following steps:
\begin{enumerate}[label=Step \arabic*., leftmargin=*]
\item Define the sequence of $e$-variables as 
\begin{align}
E_{it}&=\prod_{s=1}^{t}\left(\lambda_{is} X_{is}+1-\pi\lambda_{is}\right), ~~~ t\in[T],\label{Eit}
\end{align}
where $\lambda_{is}$ is any $[0,1/\pi)$-valued $\cF_{s-1}$-measurable random variable for $s\in[T]$.
\item For the stopping time 
\begin{align}
\tau_{i}=
\begin{cases}
\inf\left\{t\in[T]:E_{it}\geq n/\alpha \right\} & \text{if $\{t\in[T]:E_{it}\geq n/\alpha\}$ is not empty}; \\
\infty & \text{otherwise},
\end{cases}
\end{align}
define the stopped process as $E_i = E_{i,T\land\tau_{i}}$, which forms an $e$-variable. 
\end{enumerate}
Then our robust $e$-based $p$-variables are obtained by $P_i^{\normalfont{\textsf{e}}}=1/E_i$ for $i\in[n]$. 
\end{proc}

In Step 1, by the definition, $E_{it}$ becomes large as $X_{is}=1$ occurs more for $s\in[t]$. Thus a large value of $E_{it}$ is an evidence against $H_{0i}$ at time $t$. Although the form of $E_{it}$ in \eqref{Eit} is a bit complex, a similar construction can be found in \cite{waudby2023estimating}. Moreover, we will verify in Section \ref{constructionofstatistic} that any ``valid'' $e$-variable can be represented in this form. This $e$-variable contains hyper-parameter $\lambda_{it}$, which is a user-defined random variable that can reflect users' domain knowledge or belief. We call this property the data-adaptive feature. Desirably, if $\lambda_{it}$ is designed to take large values for $t\in[T]$ when $\xi_{it}$ is large, power enhancement will be expected. 
We will see  an example for such a construction in Section \ref{sec:periodic}. 

Step 2 stipulates the optional stopping rule of the multiplication in \eqref{Eit}; it is stopped as soon as the evidence against the null is sufficiently accumulated, and we can reject $H_{0i}$ without looking at what comes after. This stopping property is suitable for our hypotheses \eqref{eq:test}, which is constructed with a maximum of $\xi_{it}$ over $[T]$. As mentioned before, our objective is to detect edges that are likely to be connected at least once rather than on average. 
For example, suppose that we have a sequence of data with a strong signal against the null only in the beginning of $[T]$. If we use a sample average over the whole sequence to calculate a $p$-value, then it may result in a large value because of the weak signals in the later part of the sequence. In this case, we want to stop looking at data as soon as sufficient evidence accumulates. Unfortunately, this optional stopping rule is prohibited for ``conventional'' $p$-variables; we have to make a decision after all the data are observed. On the other hand, as \cite{grunwald2020safe} verifies, our $e$-based $p$-variable does not need to wait until observing all. We will see an example for this in Section \ref{level}.

\subsubsection{Misspecification-based p-variables}\label{subseccomparison}


As the second strategy, we intentionally misspecify the model under the null. To see this, consider a single test for an arbitrarily fixed $i$th pair of hypotheses in \eqref{eq:test} with given $\pi\in[0,1]$. Then, to construct the test statistic, we misspecify the model to assume $\xi_{it}=\pi$ a.s.\ for all $t\in[T]$ as the parameter under the null, regardless of the original null distribution of $X_{it}$. This is possible because it is the null closest to the alternative; in other words, any null hypothesis is more conservative than $\xi_{it}=\pi$ a.s. The crucial point is that this misspecified assumption is equivalent to assuming $X_{it}\sim \text{i.i.d.}\ Ber(\pi)$. Therefore, we can construct the $p$-variable by a standard argument as follows: 
\begin{align}\label{p-value}
P_{i}^{\normalfont{\textsf{m}}}
= \sum_{t=S_i}^{T}
\begin{pmatrix}
T\\t
\end{pmatrix}
\pi^t(1-\pi)^{T-t},
\end{align}
where $S_i = \sum_{t=1}^{T}X_{it}$. We call this the misspecification-based ($m$-based) $p$-variable. This construction of $p$-variables for \eqref{eq:test} is also new to the literature. If $X_{it}\sim \text{i.i.d.}\ Ber(\pi)$ is actually true, a test using this $p$-variable is known to be uniformly most powerful; see Theorem 3.4.1 and Example 3.4.2 in \cite{LehmannRomano2005TSH}. We will verify that $P_{i}^{\normalfont{\textsf{m}}}$ is indeed a proper $p$-variable under our null in Theorem \ref{misspecifiedp-value} in the next section. 

As rigorously seen later, both $P_i^{\normalfont{\textsf{e}}}$ and $P_i^{\normalfont{\textsf{m}}}$ are indeed $p$-variables under our setting in Section \ref{sec:preliminaries}, but unlike the former, using the latter in the BH procedure may not guarantee the FDR control due to a complicated dependency in $\{P_i^{\normalfont{\textsf{m}}}\}$. 
As a remedy, applying the BH procedure to $\{P_{i}^\text{m} h_n\}$ with $h_n = \sum_{k=1}^{n}1/k$, which are again $p$-variables, is an alternative choice. This procedure will lead to a FDR-controlled selection because this coincides with the procedure by \cite{benjamini2001control}. We call the BH procedure with $\{P_{i}^\text{m} h_n\}$ the BY. Intuitively, the $p$-variable $P_{i}^\text{m} h_n$ is large enough to allow any dependence structure among $\{P_{i}^\text{m}\}$ in the BH procedure.

\subsection{Comparison}

So far, we have proposed two kinds of $p$-variables, which give three  procedures to detect the $\pi$-connectable network: the BH with $\{P_i^m\}$, BY (i.e., BH with $\{P_i^mh_n\}$), and e-BH (i.e., BH with $\{P_i^e\}$). Table \ref{comparison} summarizes the characteristics of each procedure. In terms of robustness against unknown dependency in the underlying network, we should use the BY or e-BH. Again, the BH with $\{P_i^m\}$ should not be used since it could fail in controlling the FDR due to the unknown dependence. 

In comparison to the BY, the e-BH has the special properties: data adaptability and stopping rule. The e-BH can utilize information researchers possess by reflecting it to the construction of $\lambda_{it}$. Moreover, the e-BH allows us to stop updating the $e$-variable as soon as sufficient evidence accumulates. In contrast, the BY requires the use of all the data for computing $P_{i}^m$. As observed in Section \ref{sec:Simulations}, however, $P_i^e$ is in general less powerful. It might be recommended to use $P_i^mh_n$ rather than $P_i^e$ unless we have knowledge or belief on the underlying network. For instance, in SNS network, it is known that there is a clear separation between periods of being connected and not. If connective period comes earlier, we can expect $P_i^e$ to be powerful thanks to the stopping rule. Indeed, we observe the property in Section \ref{sec:Real data}.

\begin{table}[h]\centering
\caption{Comparison of $p$-variables in the BH procedure}\label{comparison}
\begin{tabular}{l|ccc}
\hline
& arbitrary dependence & data adaptability & stopping rule \\ 
\hline
BH with $\{P_i^{\normalfont{\textsf{m}}}\}$  & --  & --  & --              \\
BY (BH with $\{P_i^{\normalfont{\textsf{m}}}h_n\}$)   & $\checkmark$  & --  & --  \\
e-BH (BH with $\{P_i^{\normalfont{\textsf{e}}}\}$)  & $\checkmark$   & $\checkmark$   & $\checkmark$  \\
\hline
\end{tabular}
\end{table}


\section{Statistical Theory}\label{sec:Theory}


In Section \ref{constructionofstatistic}, we give a theoretical justification for the construction of $P_i^e$ and $P_i^m$. In Section \ref{fdrcontrol}, we will show that the e-BH and BY procedures in Section \ref{sec:Methodology} control the FDR. In section \ref{powerguarantee}, we will discuss when the procedures achieve asymptotic power tending to one.

\subsection{Theory for our p-variables}\label{constructionofstatistic}


We first give a justification for $\lambda_{it} X_{it}+1-\pi\lambda_{it}$ in \eqref{Eit}, where $\lambda_{it}$ is any $[0,1/\pi)$-valued $\cF_{t-1}$-measurable random variable for $t=1,\dots,T$. 
Denote by $\{e( \, \cdot \, ;\cF_{t-1})\}_{t=1}^T$ a sequence of random functions defined on $\{0,1\}$ such that $e(x ;\cF_{t-1})$ is an $\cF_{t-1}$-measurable random variable for all nonrandom $x\in\mathbb{R}$ and $e(X_{it} ;\cF_{t-1})$ becomes an $e$-variable for any $\pi\in(0,1)$ and $(i,t)\in\cH_0(\pi)\times[T]$. 
Hereafter, we write $e_{t-1}(\, \cdot \,)=e( \, \cdot \, ;\cF_{t-1})$ for simplicity. 

\begin{defi}\normalfont \label{def:valid}
For any $\pi\in(0,1)$ and $i\in\cH_0(\pi)$, a sequence of $e$-variables $\{e_{t-1}(X_{it})\}_{t=1}^T$ satisfying Conditions (A)--(C) a.s.\ is said to be \textit{valid}: 
\begin{enumerate}
\item[(A)] $0<e_{t-1}(0)\leq e_{t-1}(1)$ for any $t\in[T]$;
\item[(B)] $\E[e_{t-1}(X_{it})\mid\cF_{t-1}]\leq1$ for any $(i,t)\in\cH_0(\pi)\times[T]$;
\item[(C)] $\E[e_{t-1}(X_{it})\mid\cF_{t-1}]=1$ for any $(i,t)\in\{\xi_{it}=\pi\}$.
\end{enumerate}
\end{defi}
Condition (A) requires valid $e$-variables to be positive. Thus their products are also positive. Thanks to the second inequality in (A), we can interpret a large $e$-value as the evidence against the null hypothesis. Condition (B) is needed for valid $e$-variables to have the optional stopping property (which is used in the theorems we will introduce). For Condition (C), the event $\{\xi_{it}=\pi\}$ means that $i$th edge is a member of $\cH_0(\pi)$ and is on the boundary between the null and non-null. For such $i$th edge, the conditional expectation should be as large as possible and this is what (C) implies.

In general, there is no unified way to construct an e-variable, but in the situation we are considering, the following theorem gives a representation of any valid e-variable:

\begin{thm}\label{e-value}
For any $\pi\in(0,1)$ and $i\in\cH_0(\pi)$, consider 
\begin{align}
e_{t-1}(x)=\lambda_{it}x+1-\pi\lambda_{it},~~~x\in\{0,1\}, \label{e(X)}
\end{align}
where $\{\lambda_{it}\}$ is an $\cF_{t-1}$-adapted random sequence such that $\lambda_{it}\in[0,1/\pi)$ a.s.\ for all $t\in[T]$. Then the following statements hold:
\begin{enumerate}
\item[(i)] $e_{t-1}(X_{it})$ is a valid $e$-variable for any $t\in[T]$; 
\item[(ii)] Any valid $e$-variable can be represented as $e_{t-1}(X_{it})$.
\end{enumerate}
Moreover, for any $\pi\in(0,1)$ and $i\in\cH_0(\pi)$, $E_{it}=\prod_{s=1}^{t}e_{s-1}(X_{is})$ for each $t\in[T]$ and $E_i = E_{i,T\land\tau_{i}}$ given in Procedure \ref{proc:e} are $e$-variables. 
\end{thm}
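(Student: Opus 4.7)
The plan splits into three parts: (i) verifying that the proposed parametric form satisfies Definition~\ref{def:valid}, (ii) recovering $\lambda_{it}$ from a generic valid $e$-variable, and the moreover clause, which is a standard supermartingale/optional-stopping argument for the running product $E_{it}$ and its stopped version $E_i$. For (i), I would substitute the given form into Definition~\ref{def:valid} directly. Rewriting $e_{t-1}(x) = 1 + \lambda_{it}(x - \pi)$ makes the affine structure explicit: Condition (A) follows from $e_{t-1}(0) = 1 - \pi\lambda_{it} > 0$ (because $\lambda_{it} < 1/\pi$) together with $e_{t-1}(1) - e_{t-1}(0) = \lambda_{it} \geq 0$. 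Using $\E[X_{it} \mid \cF_{t-1}] = \xi_{it}$, a direct computation gives $\E[e_{t-1}(X_{it}) \mid \cF_{t-1}] = 1 + \lambda_{it}(\xi_{it} - \pi)$, which is at most $1$ for $i\in\cH_0(\pi)$ (yielding (B)) and equals $1$ on $\{\xi_{it} = \pi\}$ (yielding (C)).

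For (ii), given any valid $e$-variable, its two values $e_{t-1}(0)$ and $e_{t-1}(1)$ are $\cF_{t-1}$-measurable. I would set $\lambda_{it} := e_{t-1}(1) - e_{t-1}(0)$, which is $\cF_{t-1}$-measurable and nonnegative by (A). It remains to show $e_{t-1}(0) = 1 - \pi\lambda_{it}$, equivalently the structural identity $\pi e_{t-1}(1) + (1-\pi) e_{t-1}(0) = 1$. Since $\E[e_{t-1}(X_{it})\mid\cF_{t-1}] = \xi_{it} e_{t-1}(1) + (1-\xi_{it})e_{t-1}(0)$ is an affine function of $\xi_{it}$, and Condition (C) forces its value to equal $1$ at $\xi_{it} = \pi$, I would read (C) as a tightness identity on the affine function $e_{t-1}(\cdot)$ itself, so the required equality holds a.s. Positivity of $e_{t-1}(0)$ from (A) then forces $\lambda_{it} < 1/\pi$, and the promised representation follows.

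For the moreover clause, $E_{i,t-1}$ is $\cF_{t-1}$-measurable and nonnegative, and $E_{it} = E_{i,t-1}\cdot e_{t-1}(X_{it})$, so Condition (B) gives $\E[E_{it} \mid \cF_{t-1}] = E_{i,t-1}\,\E[e_{t-1}(X_{it}) \mid \cF_{t-1}] \leq E_{i,t-1}$. Thus $\{E_{it}\}$ is a nonnegative $\cF_t$-supermartingale with $E_{i0} = 1$, and iterated expectation gives $\E[E_{it}] \leq 1$. For $E_i = E_{i,T\land\tau_i}$, the stopping time $T\land\tau_i$ is bounded by $T$, so optional stopping for nonnegative supermartingales yields $\E[E_i] \leq \E[E_{i0}] = 1$, confirming $E_i$ is an $e$-variable. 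The main obstacle is the interpretive step in (ii): Condition (C) is phrased on the event $\{\xi_{it} = \pi\}$, whereas the parameterization needs the tightness identity on every path. The cleanest resolution is to read (C) as a structural constraint on the affine function $e_{t-1}(\cdot)$, which makes the extracted $\lambda_{it}$ well-defined globally and the representation valid on all of $\Omega$.
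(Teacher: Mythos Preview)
Your proposal is correct and follows essentially the same approach as the paper: part (i) is verified by direct substitution, part (ii) parameterizes the two values of $e_{t-1}$ and uses Conditions (A) and (C) to pin down $\lambda_{it}$ and its range, and the moreover clause is the standard supermartingale/optional-stopping argument (which the paper dispatches by citing \cite{grunwald2020safe}). You are in fact more explicit than the paper about the interpretive subtlety in Condition (C)---the paper also uses (C) to deduce $\pi a_t + b_t = 1$ a.s.\ without commenting on the event-wise phrasing---so your treatment is, if anything, slightly more careful on that point.
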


The form of $e_{t-1}(X_{it})$ in \eqref{e(X)} is a version of valid $e$-variables. 
For example, $\lambda_{it}X_{it}^2+1-\pi\lambda_{it}$ and $(\lambda_{it}+1-\pi\lambda_{it})^{X_{it}}(1-\pi\lambda_{it})^{1-X_{it}}$ are equivalent to \eqref{e(X)} with $x=X_{it}$.

An immediate consequence of Theorem \ref{e-value} is that $P_i^{\normalfont{\textsf{e}}}=1/E_i$ is indeed a $p$-variable. 
Although Theorem \ref{e-value} justifies the construction of $e$-variables for each $i\in\cH_0(\pi)$, we also have to explain the reason why we multiply them in the proposed method. We say that $e$-variables $Y_1,\dots,Y_T$ are \textit{sequential} if $\E[Y_t\mid Y_1,\dots,Y_{t-1}]\leq1$ a.s.\ for all $t\in\{1,\dots,T\}$. By Condition (B) in Definition \ref{def:valid}, we can check that $E_{i1},\dots,E_{iT}$ are sequential for each $i\in[n]$. Additionally, we say that the functions of sequential $e$-variables to produce an $e$-variable are \textit{se-merging functions}. Following Section 4 in \cite{vovk2021values}, the product function of sequential $e$-variables weakly dominates every other se-merging function. This gives a theoretical justification to use $E_{it}=\prod_{s=1}^{t}e_{s-1}(X_{is})$ in our method.



In the next theorem, we state that $P_i^{\normalfont{\textsf{m}}}$ defined in \eqref{p-value} and $P_i^{\normalfont{\textsf{m}}}h_n$ are also $p$-variables under our setting. Recall $h_n=\sum_{k=1}^n1/k$. 
\begin{thm}\label{misspecifiedp-value}
For any $\pi\in(0,1)$ and $i\in\cH_0(\pi)$, both $P_i^{\normalfont{\textsf{m}}}$ and $P_i^{\normalfont{\textsf{m}}}h_n$ 
are p-variables. 
\end{thm}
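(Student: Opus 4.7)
The plan is to first reduce the claim about $P_i^{\normalfont{\textsf{m}}} h_n$ to the one about $P_i^{\normalfont{\textsf{m}}}$. For any $\alpha\in(0,1)$, since $h_n=\sum_{k=1}^n 1/k \geq 1$, we have
\begin{align}
\Pro\bigl(P_i^{\normalfont{\textsf{m}}} h_n \leq \alpha\bigr) = \Pro\bigl(P_i^{\normalfont{\textsf{m}}} \leq \alpha/h_n\bigr) \leq \alpha/h_n \leq \alpha,
\end{align}
where the middle inequality uses that $P_i^{\normalfont{\textsf{m}}}$ is itself a p-variable. So the substantive content lies in establishing the p-variable property for $P_i^{\normalfont{\textsf{m}}}$.

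The core idea is stochastic dominance: for $i\in\cH_0(\pi)$ the hypothesis $\xi_{it}\leq \pi$ a.s.\ for every $t$ implies that $S_i=\sum_{t=1}^T X_{it}$ is stochastically dominated by $B\sim \mathrm{Bin}(T,\pi)$, i.e.\ $\Pro(S_i\geq s)\leq \Pro(B\geq s)$ for every integer $s$. I would prove this via an explicit coupling on an enlarged probability space. Introduce i.i.d.\ $\mathrm{Uniform}(0,1)$ variables $\{U_{js}\}_{j\in[n],s\in[T]}$ and sequentially construct $\tilde X_{js}=\1\{U_{js}\leq \tilde\xi_{js}\}$, where $\tilde\xi_{js}=f_{js}(\tilde X_{1},\ldots,\tilde X_{s-1})$ for the measurable function $f_{js}$ that realizes the original $\xi_{js}$ from the past. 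By induction on $s$ the joint law of $\{\tilde X_{js}\}$ matches that of $\{X_{js}\}$, so $\tilde S_i:=\sum_t \tilde X_{it}$ has the same distribution as $S_i$. Setting $X'_{it}=\1\{U_{it}\leq \pi\}$, the bound $\tilde\xi_{it}\leq \pi$ delivers $\tilde X_{it}\leq X'_{it}$ a.s., hence $\tilde S_i\leq \sum_t X'_{it}=:B'$ with $B'\sim \mathrm{Bin}(T,\pi)$ by independence of the $U_{it}$. A direct one-step induction on $T$ (conditioning on $\cF_{T-1}$ and using $\xi_{iT}\leq \pi$) is an equally clean alternative.

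Once stochastic dominance is in hand, the conclusion is routine. Writing $G(s)=\Pro(B\geq s)=\sum_{t=s}^T \binom{T}{t}\pi^t(1-\pi)^{T-t}$, we have $P_i^{\normalfont{\textsf{m}}}=G(S_i)$, and $G$ is non-increasing. For any $\alpha\in(0,1)$ set $s^\ast=\min\{s\geq 0:G(s)\leq \alpha\}$, which is finite since $G(s)=0$ for $s>T$. Then $\{G(S_i)\leq \alpha\}=\{S_i\geq s^\ast\}$, and stochastic dominance yields
\begin{align}
\Pro\bigl(P_i^{\normalfont{\textsf{m}}}\leq \alpha\bigr)=\Pro(S_i\geq s^\ast)\leq \Pro(B\geq s^\ast)=G(s^\ast)\leq \alpha.
\end{align}
The main obstacle I anticipate is the stochastic dominance step itself: care is required because $\xi_{it}$ depends on the entire past $\cF_{t-1}$ across all edges and not just edge $i$; but since we only need dominance at the single index $i$, reconstructing the full process with shared uniforms on the enlarged space sidesteps this cleanly.
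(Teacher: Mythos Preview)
Your reduction of $P_i^{\normalfont{\textsf{m}}} h_n$ to $P_i^{\normalfont{\textsf{m}}}$ and your final quantile argument are both correct and coincide with the paper's. The key step is indeed the stochastic dominance $\Pro(S_i\geq s)\leq\Pro(B\geq s)$ for $B\sim\mathrm{Bin}(T,\pi)$, and the paper establishes it via precisely the ``direct one-step induction on $T$'' that you list as an alternative: it replaces $X_{iT}$ by an independent $Y_T\sim\mathrm{Ber}(\pi)$, then $X_{i,T-1}$ by $Y_{T-1}$, and so on, using $\Pro(X_{it}\geq c\mid\cF_{t-1})\leq\Pro(Y_t\geq c)$ and iterated expectations at each step.

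Your primary coupling construction, however, has a real gap. The paper allows arbitrary dependence among $X_{1t},\ldots,X_{nt}$ given $\cF_{t-1}$, whereas your reconstruction $\tilde X_{jt}=\1\{U_{jt}\leq\tilde\xi_{jt}\}$ with i.i.d.\ uniforms forces conditional independence across $j$ at each time. Hence the joint law of $\{\tilde X_{js}\}$ need not match that of $\{X_{js}\}$; already at $s=1$ the coordinate marginals agree but the joint across $j$ may not. This breaks the argument in two places: the law of $(\tilde X_{it})_t$, and hence of $\tilde S_i$, can differ from that of $(X_{it})_t$ because $\xi_{it}$ depends on the full past across all $j$; and the null constraint $\xi_{it}\leq\pi$ a.s.\ is only asserted on the support of the true past, so $\tilde\xi_{it}=f_{it}(\tilde X_{\cdot,<t})$ may exceed $\pi$ with positive probability, invalidating $\tilde X_{it}\leq X'_{it}$. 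The coupling is fixable (e.g., work on an enlargement of the \emph{original} space and define $U_{it}$ conditionally on $(X_{it},\cF_{t-1})$ to be uniform on $(0,\xi_{it}]$ or $(\xi_{it},1]$ according to $X_{it}$, then enlarge the filtration), but your induction alternative is the cleaner route and is exactly what the paper does.
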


\subsection{Theory for the FDR control}\label{fdrcontrol}
Once we obtain an $e$-variable for each hypothesis, \cite{wang2022false} show that the e-BH procedure, which applies the BH for $\{P_i^{\normalfont{\textsf{e}}}\}$ with $P_i^{\normalfont{\textsf{e}}}=1/E_i$, controls the FDR less than or equal to the pre-determined level, regardless of the dependence structure among them. Since the BH with our $\{P_i^{\normalfont{\textsf{e}}}\}$ arrives at the e-BH, the FDR control is guaranteed for each $n$ and $T$. 
\begin{thm}\label{thm:fdr}
For any level $\alpha\in[0,1]$, the BH procedure with $\{P_i^{\normalfont{\textsf{e}}}\}$ controls the $\text{FDR}(\pi)$ at most $\alpha|\cH_0(\pi)|/n$.
\end{thm}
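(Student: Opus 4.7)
The plan is to reduce the claim to the standard e-BH guarantee of \cite{wang2022false}, relying on Theorem \ref{e-value} which already establishes that each $E_i$ is an $e$-variable for $i\in\cH_0(\pi)$, i.e., $\E[E_i]\le 1$. Since $P_i^{\normalfont{\textsf{e}}}=1/E_i$, the BH procedure applied to $\{P_i^{\normalfont{\textsf{e}}}\}$ is literally the e-BH procedure, so what remains is to carry out the short self-contained argument bounding its FDR.

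First I would unpack the definitions. Let $K=|\what{\cH}_1(\pi)|$ and adopt the convention $0/0=0$. By construction of the BH applied to $\{P_i^{\normalfont{\textsf{e}}}\}$, any rejected index $i\in\what{\cH}_1(\pi)$ must satisfy $P_i^{\normalfont{\textsf{e}}}\le K\alpha/n$, equivalently
\begin{align}
E_i \;\ge\; \frac{n}{K\alpha} \qquad \text{on the event }\{i\in\what{\cH}_1(\pi),\,K\ge 1\}.
\end{align}
This is the one-line key inequality that underlies the e-BH; I would state and verify it first, rearranging it into the pointwise bound
\begin{align}
\frac{\mathbb{1}\{i\in\what{\cH}_1(\pi)\}}{K} \;\le\; \frac{\alpha\,E_i}{n},
\end{align}
which is valid on $\{K\ge 1\}$ and trivially valid (both sides $=0$) on $\{K=0\}$ under our convention.

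Second, I would plug this bound into the FDR. Writing the number of false rejections as $\sum_{i\in\cH_0(\pi)}\mathbb{1}\{i\in\what{\cH}_1(\pi)\}$ and applying the inequality termwise,
\begin{align}
\FDR(\pi) \;=\; \E\!\left[\sum_{i\in\cH_0(\pi)}\frac{\mathbb{1}\{i\in\what{\cH}_1(\pi)\}}{K}\right] \;\le\; \frac{\alpha}{n}\sum_{i\in\cH_0(\pi)}\E[E_i] \;\le\; \frac{\alpha\,|\cH_0(\pi)|}{n},
\end{align}
where the final step uses the $e$-variable property $\E[E_i]\le 1$ guaranteed by Theorem \ref{e-value}. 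Exchanging sum and expectation is justified by nonnegativity (Tonelli).

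The main subtlety, and the only place where care is needed, is the pointwise bound that underpins the argument: one must be sure that the definition of $K$ (as the largest index on the BH ladder) really forces $P_i^{\normalfont{\textsf{e}}}\le K\alpha/n$ for every rejected $i$, and that the $K=0$ case causes no division issue. Once that is verified, no dependence assumption between the $E_i$'s is needed — this is precisely the appeal of e-variables — and the bound holds for every fixed $n$ and $T$, so no asymptotics are invoked.
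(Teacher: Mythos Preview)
Your proposal is correct and takes essentially the same approach as the paper: both recognize that the BH applied to $\{P_i^{\textsf{e}}\}=\{1/E_i\}$ is exactly the e-BH procedure of \cite{wang2022false}, and invoke its FDR guarantee together with Theorem~\ref{e-value} to ensure $\E[E_i]\le 1$ for $i\in\cH_0(\pi)$. The only difference is that the paper simply cites \cite{wang2022false}, whereas you spell out the short pointwise argument $\mathbb{1}\{i\in\what{\cH}_1(\pi)\}/K\le \alpha E_i/n$ that underlies it.
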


Theorem \ref{thm:fdr} holds for any $\lambda_{it}$ in $P_i^{\normalfont{\textsf{e}}}$. 
On the other hand, the BH with $P_i^{\normalfont{\textsf{m}}}$'s cannot necessarily control the FDR owing to the unknown dependence structure among them. However, the BH with $\{P_i^{\normalfont{\textsf{m}}}h_n\}$, which is equivalent to the BY with $\{P_i^{\normalfont{\textsf{m}}}\}$, is conservative enough to control the FDR under any dependence. 

\begin{thm}\label{thm:fdr2}
For any level $\alpha\in[0,1]$, the BH procedure with $\{P_i^{\normalfont{\textsf{m}}}h_n\}$ controls the $\text{FDR}(\pi)$ at most $\alpha|\cH_0(\pi)|/n$.
\end{thm}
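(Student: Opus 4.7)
The plan is to reduce Theorem \ref{thm:fdr2} to the classical FDR-control result of \cite{benjamini2001control}. Their theorem states that, given any collection $\{P_i\}_{i=1}^n$ of $p$-variables (one per null hypothesis), the BH procedure applied with the modified threshold $i\alpha/(n h_n)$ in place of $i\alpha/n$ controls the FDR at $\alpha|\cH_0|/n$ under \emph{arbitrary} dependence among the $p$-variables. Since multiplying every $p$-variable by $h_n$ and then running the unmodified BH at level $\alpha$ is algebraically identical to running BH on the original $p$-variables with threshold $i\alpha/(nh_n)$, the procedure described in the statement is precisely the BY procedure applied to $\{P_i^{\normalfont{\textsf{m}}}\}$.

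First I would verify that the hypotheses of the BY theorem are met. For each $i \in \cH_0(\pi)$, Theorem \ref{misspecifiedp-value} already guarantees that $P_i^{\normalfont{\textsf{m}}}$ is a $p$-variable under $H_{0i}$, which is all the BY result requires. No assumption on the joint distribution of $\{P_i^{\normalfont{\textsf{m}}}\}_{i=1}^n$ across $i$ is needed, so the possibly complicated cross-edge dependence arising from the Bernoulli series $\{X_{it}\}$ presents no obstacle. Second I would combine the equivalence in the previous paragraph with BY to conclude $\FDR(\pi) \le \alpha |\cH_0(\pi)|/n$.

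The only delicate point worth commenting on is why we cannot simply apply the e-BH-style argument of Theorem \ref{thm:fdr}: the $p$-variables $P_i^{\normalfont{\textsf{m}}}$ are not reciprocals of $e$-variables, and they need not enjoy the PRDS condition under which the unadjusted BH is valid. The factor $h_n$ is precisely the $\log$-correction that inflates each $p$-variable enough to neutralize unknown negative or higher-order dependence. Since that correction is exactly what BY showed suffices for dependence-free FDR control, and since the validity of each $P_i^{\normalfont{\textsf{m}}}$ marginally under its null has already been established in Theorem \ref{misspecifiedp-value}, no further work is needed.
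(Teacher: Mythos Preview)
Your proposal is correct and matches the paper's own argument essentially verbatim: the paper's proof simply notes that $\{P_i^{\normalfont{\textsf{m}}}h_n\}$ are $p$-variables and that the BH applied to them coincides with the procedure of \cite{benjamini2001control}, whence the FDR bound follows. Your additional remarks (the equivalence between scaling the $p$-variables and shrinking the BH threshold, and the appeal to Theorem~\ref{misspecifiedp-value} for marginal validity) are exactly the ingredients behind that one-line proof.
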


\subsection{Theory for the power guarantee}\label{powerguarantee}

We next consider when the power tends to unity. In a general multiple testing framework, this power guarantee is difficult because it requires simultaneous rejection of nulls for all $i\in\cH_1(\pi)$. However, we can stop updating each $e$-variable when it becomes large enough to reject $H_{0i}$. This means that the condition for the power guarantee reduces to \textit{hypothesis-wise} argument. 

\begin{con}\label{powercondition}
    For any $i\in\cH_1(\pi)$, we have $\Pro(\tau_{i}(\pi)<\infty)\to 1$.
\end{con}

\begin{thm}\label{power}
Suppose that Condition \ref{powercondition} holds. Then for any $\pi\in(0,1)$, the BH with $P_i^{\normalfont{\textsf{e}}}$'s achieves $\Pwr(\pi)\to 1$.
\end{thm}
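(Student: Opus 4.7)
The plan is to reduce the claim to a hypothesis-wise statement, as suggested in the paragraph preceding the theorem. Specifically, by linearity,
\begin{align}
\Pwr(\pi)=\frac{1}{|\cH_1(\pi)|}\sum_{i\in\cH_1(\pi)}\Pro\!\left(i\in\what{\cH}_1(\pi)\right),
\end{align}
so it suffices to show $\Pro(i\in\what{\cH}_1(\pi))\to 1$ for every $i\in\cH_1(\pi)$, since then the average converges to one (taking the convergence in Condition \ref{powercondition} to hold uniformly over the finitely many $i\in\cH_1(\pi)$, or invoking dominated convergence when $|\cH_1(\pi)|$ is allowed to grow).

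First I would translate the stopping condition into a bound on $P_i^{\textsf{e}}$. On the event $\{\tau_i<\infty\}$ we have $\tau_i\leq T$, hence $E_i=E_{i,T\wedge\tau_i}=E_{i,\tau_i}\geq n/\alpha$ by the very definition of $\tau_i$ in Procedure \ref{proc:e}. Consequently $P_i^{\textsf{e}}=1/E_i\leq \alpha/n$ on $\{\tau_i<\infty\}$.

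Next I would invoke the self-consistency of the BH procedure: whenever a single $p$-variable satisfies $P_i^{\textsf{e}}\leq\alpha/n$, the index $i$ is necessarily rejected. Letting $r$ be the rank of $P_i^{\textsf{e}}$ among $\{P_j^{\textsf{e}}\}_{j=1}^n$, we have $P_{(r)}=P_i^{\textsf{e}}\leq\alpha/n\leq r\alpha/n$, so $r$ lies in the set $\{k:P_{(k)}\leq k\alpha/n\}$ defining $K$. Hence $K\geq r$ (in particular $K$ exists), and therefore $P_i^{\textsf{e}}=P_{(r)}\leq P_{(K)}$, which means $i\in\what{\cH}_1(\pi)$. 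Chaining the two implications yields
\begin{align}
\Pro\!\left(i\in\what{\cH}_1(\pi)\right)\geq \Pro\!\left(P_i^{\textsf{e}}\leq\alpha/n\right)\geq \Pro(\tau_i<\infty)\to 1
\end{align}
by Condition \ref{powercondition}, and substituting back into the expression for $\Pwr(\pi)$ completes the argument.

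The main obstacle I anticipate is not the hypothesis-wise analysis itself — each step is essentially a one-line verification once the optional-stopping interpretation of $E_i$ is in place — but rather the passage from pointwise convergence $\Pro(\tau_i<\infty)\to 1$ to convergence of the average. If $|\cH_1(\pi)|$ is allowed to depend on $T$, some care is needed to guarantee $\min_{i\in\cH_1(\pi)}\Pro(\tau_i<\infty)\to 1$; otherwise the uniform version is implicit in Condition \ref{powercondition} and the conclusion is immediate.
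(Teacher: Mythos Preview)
Your proposal is correct and follows essentially the same route as the paper: decompose $\Pwr(\pi)$ by linearity into $\frac{1}{|\cH_1(\pi)|}\sum_{i\in\cH_1(\pi)}\Pro(i\in\what\cH_1(\pi))$, observe that $\{\tau_i<\infty\}$ forces $E_i\geq n/\alpha$ and hence rejection by BH, and then invoke Condition~\ref{powercondition}. Your rank argument spelling out why $P_i^{\textsf{e}}\leq\alpha/n$ guarantees $i\in\what\cH_1(\pi)$ is more explicit than the paper (which simply asserts it), and your closing remark about the passage from pointwise to average convergence when $|\cH_1(\pi)|$ may grow is a legitimate caveat that the paper's proof likewise leaves implicit.
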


Whether Condition \ref{powercondition} holds or not depends on the construction of $\lambda_{it}$. Recall that $\lambda_{it}$ can be any random variable for the FDR control as long as it is  $\cF_{t-1}$-measurable and satisfies $0\leq\lambda_{it}<1/\pi$ for each $i\in\cH_0(\pi)$. Therefore, we can construct ``tailored'' $\lambda_{it}$ utilizing past information $\cF_{t-1}$ and domain knowledge for power gain. Roughly speaking, we should make $\lambda_{it}$ large when we believe $X_{it}=1$ occurs, leading to a large $e$-variables and thus a high power.  

Here, we give a set of sufficient conditions for Condition \ref{powercondition} and an example of $\lambda_{it}$ when we believe that for each $i$ the time series $\{X_{it}\}$ admits the law of large numbers. 


\begin{con}\label{cond:3}
Both $T$ and $n$ diverge with $(\log n)/T\to0$. 
\end{con}

\begin{con}\label{cond:2}
For each $i\in\cH_1(\pi)$ with given $\pi\in(0,1)$, there exists some constant $\pi_i\in(\pi,1]$ such that $\bar{X}_{iT}:=\sum_{s=1}^{T}X_{is}/T\xrightarrow{p}\pi_i$ as $T\to\infty$.
\end{con}



\begin{con}\label{cond:4}
For each $i\in\cH_1(\pi)$ with given $\pi\in(0,1)$, there exists some constant $\lambda_i\in(0,\lambda_i^*]\cap(0,1/\pi)$ with $\lambda_i^*:=(\pi_i-\pi)/\{\pi(1-\pi)\}$ such that  $\lambda_{iT}\xrightarrow{p}\lambda_i$. 
\end{con}

Condition \ref{cond:2} holds if $\{X_{it}\}$ is stationary and the serial dependence is sufficiently weak. More specifically, for any $i\in\cH_1(\pi)$, a sufficient condition for $\bar{X}_{iT}\xrightarrow{p}\pi_i$ as $T\to\infty$ is that $\{X_{it}\}$ is weakly stationary with $\E[X_{it}]=\pi_i$ and $\sum_{h=0}^\infty |\Cov{(X_{it},X_{i,t-h})}|<\infty$ holds. 
We explain Condition \ref{cond:4}. 
As observed in the proof, Condition \ref{cond:2} and $\lambda_{iT}\xrightarrow{p}\lambda_i$ for some $\lambda_i$ imply 
\begin{align}\label{logE}
\frac{1}{T}\log E_{iT} = \log (\lambda_i+1-\pi\lambda_i)^{\pi_i}(1-\pi\lambda_i)^{1-\pi_i} + o_p(1)
\end{align}
uniformly in $\lambda_i$. This function is maximized at $\lambda_i^*$. Roughly speaking, Condition \ref{cond:4} says that $\lambda_{it}$ should be designed to be asymptotically at most $\lambda_i^*$. 

\begin{thm}\label{lln}
Conditions \ref{cond:3}--\ref{cond:4} imply Condition \ref{powercondition}.    
\end{thm}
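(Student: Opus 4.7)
The plan is to reduce Condition \ref{powercondition} to a hypothesis-wise statement about $E_{iT}$, then to exploit the explicit product form of $E_{iT}$ together with the law-of-large-numbers-type Conditions \ref{cond:2}--\ref{cond:4}. Fix $i\in\cH_1(\pi)$. Since $E_{iT}\ge n/\alpha$ forces $\tau_i\le T<\infty$, it suffices to prove $\Pro(E_{iT}\ge n/\alpha)\to1$, i.e., $\Pro(\log E_{iT}\ge\log(n/\alpha))\to1$. Because Condition \ref{cond:3} gives $\log(n/\alpha)/T\to0$, this further reduces to showing that $T^{-1}\log E_{iT}$ converges in probability to a strictly positive constant.

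The key step is to establish \eqref{logE}. First I would decompose
\begin{align*}
\frac{1}{T}\log E_{iT}=\frac{1}{T}\sum_{s=1}^T\bigl[X_{is}\log(1+(1-\pi)\lambda_{is})+(1-X_{is})\log(1-\pi\lambda_{is})\bigr].
\end{align*}
I would then use Condition \ref{cond:4} to replace each $\lambda_{is}$ by the common limit $\lambda_i$ up to an error that is negligible in Cesaro mean, and combine this with $\bar X_{iT}\xrightarrow{p}\pi_i$ (Condition \ref{cond:2}) to obtain $T^{-1}\log E_{iT}=f(\lambda_i)+o_p(1)$, where $f(\lambda):=\pi_i\log(1+(1-\pi)\lambda)+(1-\pi_i)\log(1-\pi\lambda)$.

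It then remains to show $f(\lambda_i)>0$ for every $\lambda_i\in(0,\lambda_i^*]$ with $\lambda_i^*=(\pi_i-\pi)/\{\pi(1-\pi)\}$. A routine differentiation gives
\begin{align*}
f'(\lambda)=\frac{(\pi_i-\pi)-\pi(1-\pi)\lambda}{(1+(1-\pi)\lambda)(1-\pi\lambda)},
\end{align*}
which is strictly positive on $[0,\lambda_i^*)$ and vanishes at $\lambda_i^*$; since $f(0)=0$ and $\pi_i>\pi$ for $i\in\cH_1(\pi)$, we obtain $f(\lambda_i)>0$. (As a sanity check, $f(\lambda_i^*)$ equals the Kullback--Leibler divergence $D(\mathrm{Ber}(\pi_i)\|\mathrm{Ber}(\pi))$, confirming positivity.) Combining $T^{-1}\log E_{iT}\xrightarrow{p}f(\lambda_i)>0$ with $\log(n/\alpha)/T\to0$ yields $\Pro(E_{iT}\ge n/\alpha)\to1$, proving Condition \ref{powercondition}.

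I expect the main obstacle to be the Cesaro replacement of $\lambda_{is}$ by $\lambda_i$ in the preceding sum: Condition \ref{cond:4} as stated only asserts convergence at the terminal index $T$, so justifying a uniform-in-$s$ error bound requires either exploiting a specific construction of $\lambda_{is}$ (typically a predictable, slowly varying estimator whose convergence propagates along the trajectory) or invoking a Toeplitz-type argument together with continuity of $\log$ on compact subsets of $[0,1/\pi)$. Once \eqref{logE} is in hand, the analysis of $f$ and the application of Condition \ref{cond:3} are straightforward.
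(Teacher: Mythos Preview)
Your proposal is correct and follows essentially the same route as the paper: reduce to $T^{-1}\log E_{iT}\xrightarrow{p}f(\lambda_i)>0$, handle the Ces\`aro replacement of $\lambda_{is}$ by $\lambda_i$, and conclude via Condition~\ref{cond:3}. Regarding your stated obstacle, the paper reads Condition~\ref{cond:4} as $\lambda_{is}\xrightarrow{p}\lambda_i$ along $s$ (since $\lambda_{is}$ is $\cF_{s-1}$-measurable and does not depend on the horizon $T$), and then carries out precisely the Toeplitz-type step you anticipate: it uses uniform boundedness of the $\log$-differences (their Lemma~\ref{lem:UI}) to upgrade $|U_{is}|,|V_{is}|\xrightarrow{p}0$ to $L^1$-convergence, and applies Markov's inequality plus the Ces\`aro mean.
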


We should design $\lambda_{it}$ that converges to $\lambda_i^*$ in probability because larger $E_{iT}$ leads to higher power. An explicit construction of $\lambda_{it}$ is given as follows. Let $\bar{X}_{i0}=0$. 

\begin{prop}\label{prop:lambda}
For each $i\in\cH_1(\pi)$ with given $\pi\in(0,1)$, define 
\begin{align}\label{eq:lambda}
\lambda_{it}=\min\left\{0\lor\frac{\bar{X}_{i,t-1}-\pi}{\pi(1-\pi)},~\bar{\lambda}\right\}~\text{ for any }~\bar{\lambda}\in(0,1/\pi).
\end{align}
Then, Condition \ref{cond:2} implies Condition \ref{cond:4} with $\lambda_i=\min\{\lambda_i^*,\bar{\lambda}\}$.
\end{prop}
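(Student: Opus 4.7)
The plan is to verify Condition \ref{cond:4} by a direct application of the continuous mapping theorem to the definition \eqref{eq:lambda} evaluated at $t = T$.

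First I would establish that $\bar{X}_{i,T-1} \xrightarrow{p} \pi_i$ as $T \to \infty$ for each $i \in \cH_1(\pi)$. This is immediate from Condition \ref{cond:2} upon reindexing; equivalently, since $|\bar{X}_{iT} - \bar{X}_{i,T-1}| \leq 2/T \to 0$ deterministically, both sample averages share the same probability limit $\pi_i$. This step just requires observing that the one-index shift is asymptotically negligible.

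Next I would apply the continuous mapping theorem to the map
\begin{align}
g(x) = \min\!\left\{0 \lor \frac{x-\pi}{\pi(1-\pi)},\; \bar{\lambda}\right\},
\end{align}
which is continuous on $\bbR$. Because $\pi_i > \pi$ we have $(\pi_i - \pi)/\{\pi(1-\pi)\} = \lambda_i^* > 0$, so the inner maximum with $0$ is inactive at the limit point and $g(\pi_i) = \min\{\lambda_i^*, \bar{\lambda}\}$. Combining with the previous step yields $\lambda_{iT} = g(\bar{X}_{i,T-1}) \xrightarrow{p} \min\{\lambda_i^*, \bar{\lambda}\} =: \lambda_i$.

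Finally I would check that this $\lambda_i$ lies in the range required by Condition \ref{cond:4}: positivity holds since both $\lambda_i^* > 0$ and $\bar{\lambda} > 0$, the bound $\lambda_i \leq \lambda_i^*$ is built into the minimum, and $\lambda_i \leq \bar{\lambda} < 1/\pi$ follows from the hypothesis $\bar{\lambda} \in (0, 1/\pi)$. There is no substantive obstacle here; the only subtlety worth flagging is that the truncation by $\bar{\lambda}$ and the flooring at $0$ are needed to keep $\lambda_{it}$ inside $[0, 1/\pi)$ pathwise (as required for the FDR-control statements), but they become inactive at the limit precisely on $\cH_1(\pi)$ under Condition \ref{cond:2}.
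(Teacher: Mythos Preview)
Your proof is correct and considerably cleaner than the paper's. The paper argues directly: it writes $\lambda_{it}=\min\{0\lor\hat\lambda_{it},\bar\lambda\}$ with $\hat\lambda_{it}=(\bar X_{i,t-1}-\pi)/(\pi-\pi^2)$, then bounds $\Pro(|\lambda_{it}-\min\{\bar\lambda,\lambda_i^*\}|>\epsilon)$ through a chain of case splits and union bounds on the $\min$ and $\max$ operations, ultimately reducing everything to $\Pro(|\hat\lambda_{it}-\lambda_i^*|>\epsilon)\to0$ and $\Pro(\hat\lambda_{it}\le0)\to0$. Your route---observe $\bar X_{i,T-1}\xrightarrow{p}\pi_i$, note that $g(x)=\min\{0\lor(x-\pi)/[\pi(1-\pi)],\bar\lambda\}$ is continuous, apply the continuous mapping theorem---reaches the same conclusion in one line and makes the range check for $\lambda_i$ transparent. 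The only minor slip is in your closing remark: the flooring at $0$ indeed becomes inactive at the limit on $\cH_1(\pi)$, but the truncation at $\bar\lambda$ need not (it is active exactly when $\lambda_i^*>\bar\lambda$); your computation $g(\pi_i)=\min\{\lambda_i^*,\bar\lambda\}$ already handles this correctly, so the proof itself is unaffected.
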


In consequence, by Theorem \ref{lln}, Conditions \ref{cond:3} and \ref{cond:2}  together with \eqref{eq:lambda} imply Condition \ref{powercondition}. This means that the power tends to unity asymptotically, regardless of the choice of  $\bar{\lambda}\in(0,1/\pi)$. Under a finite sample situation, however, we should use $\bar{\lambda}$ that is close to $1/\pi$, like $\bar{\lambda}=1/\pi-0.01$. If $\bar{\lambda}$ is large enough to satisfy $\lambda_i^*\leq\bar{\lambda}$, we have $\lambda_{iT}\xrightarrow{p}\lambda_i=\lambda_i^*$, which will lead to the highest power as mentioned above.


We have considered the power of e-BH with our $\{P_i^{\normalfont{\textsf{e}}}\}$ so far. Meanwhile, the BH with either $\{P_i^{\normalfont{\textsf{m}}}\}$ or $\{P_i^{\normalfont{\textsf{m}}}h_n\}$ is also asymptotically powerful under the same conditions.

\begin{thm}\label{pwr2}
For any $\pi\in(0,1)$, suppose that Conditions \ref{cond:3} and \ref{cond:2} are satisfied. 
Then the BH with either $\{P_i^{\normalfont{\textsf{m}}}\}$ or $\{P_i^{\normalfont{\textsf{m}}}h_n\}$ achieves $\Pwr(\pi)\to 1$.
\end{thm}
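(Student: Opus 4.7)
The plan is to reduce the power-to-one claim for either $\{P_i^{\normalfont{\textsf{m}}}\}$ or $\{P_i^{\normalfont{\textsf{m}}}h_n\}$ to a Binomial tail concentration bound, and then apply a standard BH ranking argument. The two cases can be handled in parallel since the factor $h_n$ is only logarithmic in $n$ and is absorbed by the exponential decay in $T$ driven by Condition \ref{cond:3}.

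First, I observe that $P_i^{\normalfont{\textsf{m}}} = \Pro(W_T \geq S_i)$ where $W_T \sim \text{Bin}(T,\pi)$ is an auxiliary variable and $S_i = T\bar{X}_{iT}$ is the realized statistic. On the data event $\bar{X}_{iT} > \pi$, Hoeffding's inequality for the Binomial tail gives
\begin{align}
P_i^{\normalfont{\textsf{m}}} \leq \exp\bigl(-2T(\bar{X}_{iT}-\pi)^2\bigr).
\end{align}
For each $i \in \cH_1(\pi)$, set $\delta_i = (\pi_i - \pi)/2 > 0$. By Condition \ref{cond:2}, $\Pro(\bar{X}_{iT} \geq \pi + \delta_i) \to 1$, and on this event $P_i^{\normalfont{\textsf{m}}} \leq \exp(-2T\delta_i^2)$. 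Using Condition \ref{cond:3} ($\log n/T \to 0$) together with $h_n \leq 1 + \log n$, the quantity $\log(nh_n/\alpha) - 2T\delta_i^2$ tends to $-\infty$. Hence for all $T, n$ sufficiently large, the bound $P_i^{\normalfont{\textsf{m}}} h_n \leq \alpha/n$ (and a fortiori $P_i^{\normalfont{\textsf{m}}} \leq \alpha/n$) holds with probability tending to one.

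Next, I apply a standard BH ranking argument: if every alternative $p$-value satisfies $P_i \leq \alpha/n$, these values occupy the $|\cH_1(\pi)|$ smallest positions in the sorted ordering, so $P_{(|\cH_1(\pi)|)} \leq \alpha/n \leq |\cH_1(\pi)|\alpha/n$. Therefore the BH threshold index satisfies $K \geq |\cH_1(\pi)|$, and every alternative is rejected. On this event $|\cH_1(\pi) \cap \what{\cH}_1(\pi)|/|\cH_1(\pi)| = 1$, so by bounded convergence $\Pwr(\pi) \to 1$ in both cases.

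The main obstacle is upgrading the bound from pointwise to simultaneous control over $\cH_1(\pi)$, whose cardinality may grow with $n$. A union bound reduces this to establishing $\sum_{i \in \cH_1(\pi)} \Pro(\bar{X}_{iT} < \pi + \delta_i) \to 0$, which demands a uniform concentration rate for $\bar{X}_{iT}$ across $i$; the Hoeffding-type exponential bound $\exp(-2T\delta_i^2)$ delivers this provided the gaps $\delta_i$ are uniformly bounded below, which appears to be the implicit convention underlying Condition \ref{cond:2}. This mirrors the technical point already present in the $e$-based power argument (Theorem \ref{power} combined with Theorem \ref{lln}), so I anticipate resolving it in the same manner.
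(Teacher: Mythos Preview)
Your Hoeffding bound on the auxiliary binomial tail, $P_i^{\normalfont{\textsf{m}}} \leq \exp\bigl(-2T(\bar X_{iT}-\pi)^2\bigr)$ on the event $\{\bar X_{iT}>\pi\}$, is correct and is in fact cleaner than the paper's device, which instead invokes a binomial quantile estimate of \cite{short2023binomial} to show that the upper $\alpha/n$-quantile $q_{nT}$ of $\text{Bin}(T,\pi)$ satisfies $q_{nT}/T \leq \pi + O\bigl(\sqrt{(\log n)/T}\bigr)$; both routes deliver $P_i^{\normalfont{\textsf{m}}}h_n \leq \alpha/n$ once $\bar X_{iT}$ exceeds $\pi$ by a fixed margin and $(\log n)/T\to 0$.

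The substantive difference is that the paper never passes through a simultaneous event. It uses the elementary fact that $P_i\leq \alpha/n$ alone forces $i\in\what\cH_1(\pi)$ under BH (if $r$ indices have $p$-values at most $\alpha/n$ then $P_{(r)}\leq r\alpha/n$, so $K\geq r$ and all such indices are rejected), and writes
\[
\Pwr(\pi)=\frac{1}{|\cH_1(\pi)|}\sum_{i\in\cH_1(\pi)}\Pro\bigl(i\in\what\cH_1(\pi)\bigr)\ \geq\ \frac{1}{|\cH_1(\pi)|}\sum_{i\in\cH_1(\pi)}\Pro\bigl(P_i^{\normalfont{\textsf{m}}}\leq \alpha/n\bigr),
\]
and then argues that each summand tends to $1$. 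Your self-identified ``main obstacle'' of upgrading to simultaneous control via a union bound over $\cH_1(\pi)$ is therefore unnecessary, and your proposed resolution of it is actually wrong: you invoke a Hoeffding-type exponential bound for $\bar X_{iT}$ itself, but the $X_{it}$ are dependent in general and Condition~\ref{cond:2} supplies only $\bar X_{iT}\xrightarrow{p}\pi_i$ with no rate, so $\sum_{i\in\cH_1(\pi)}\Pro(\bar X_{iT}<\pi+\delta_i)\to 0$ is not obtainable from the stated assumptions. The paper's averaging route still implicitly needs the marginal convergence to be uniform over $i$ when $|\cH_1(\pi)|$ grows, but that is a strictly weaker requirement than the union bound you are aiming for, and it is exactly how the paper handles the parallel step in its proof of Theorem~\ref{power}.
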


By Theorems \ref{power} and \ref{pwr2}, the BH with our $p$-variables can achieve asymptotic power one. However, the power in a finite sample situation might not be high. For a power enhancement of the BY and e-BH in such cases, the following randomization is effective.

\begin{rem}\normalfont
Let $P_i$ denote either $P_i^e$ or $P_i^m h_n$. We can make the BH with $\{P_i\}$ more powerful via randomization. Generating $U\sim U(0,1)$ that is independent of $\{P_i\}$, we use the threshold defined as
\begin{align}
\tilde{K}=\max\left\{i\in[n]:P_i\leq\frac{\alpha(\lfloor i/U\rfloor)}{n}\right\},
\end{align}
instead of $K$ in BH. Then the FDR is still controlled under any dependence structure; see \cite{xu2023more} for more detail. Since the threshold becomes at least as large as the original one, the power will gain. 
\end{rem}

\section{Simulations}\label{sec:Simulations}

We consider four cases: i.i.d.\ Bernoulli in Section \ref{iidbernoulli}, logistic regression model in Section \ref{logistic}, level shift model in Section \ref{level}, and periodic data in Section \ref{sec:periodic}. For each case, we fix the number of edges $n=300$, the parameter of the hypotheses $\pi=0.1$, and the FDR level $\alpha=0.1$ while vary the number of $\pi$-connectable edges $|\cH_1(\pi)|\in\{30,60,90,120,150\}$ and the number of networks $T\in\{100,200,300,400,500\}$. We compare the performances of $p$-variables $P_i^m$, $P_i^mh_n$, and $P_i^e$ based on $500$ replications. Here, we use \eqref{eq:lambda} with $\bar{\lambda}=1/\pi-0.01$ to construct $P_i^e$.

\subsection{i.i.d.\ Bernoulli}\label{iidbernoulli}
For $i\in[n]$ and $t\in[T]$, we generate $X_{it}\sim \text{i.i.d.}Ber(\pi_i)$. Since the Bernoulli parameters are non-random in this case, our multiple testing problem reduces to
\begin{align}
    H_{0i}:\pi_i\leq \pi\text{ versus }H_{1i}:\pi_i>\pi~\text{for each }i\in[n].
\end{align}
We create the $p$-variables $P_i^e$ and $P_i^m$ as in Procedure \ref{proc:e} and \eqref{p-value}, respectively. Since the resulting $p$-variables are independent under the current setting, the BH procedure controls the FDR as the BY and e-BH do. In our simulation, we set $\pi_i=0.1$ for $i\in\cH_0(\pi)$ and $\pi_i=0.15$ for $i\in\cH_0(\pi)$. 

Figures \ref{fig:FDR1} and \ref{fig:PWR1} show the FDR and power, respectively. Since there is no dependence among edges, each procedure controls the FDR. The BH procedure is the most powerful as expected. Moreover, the power of the e-BH procedure is less than that of the BY procedure, which may be because the quality of $p$-variables are better than $e$-variables.

\begin{figure}[H]
  \begin{minipage}[b]{0.32\linewidth}
    \centering
    \includegraphics[keepaspectratio,scale=0.26,bb=700 0 0 800]{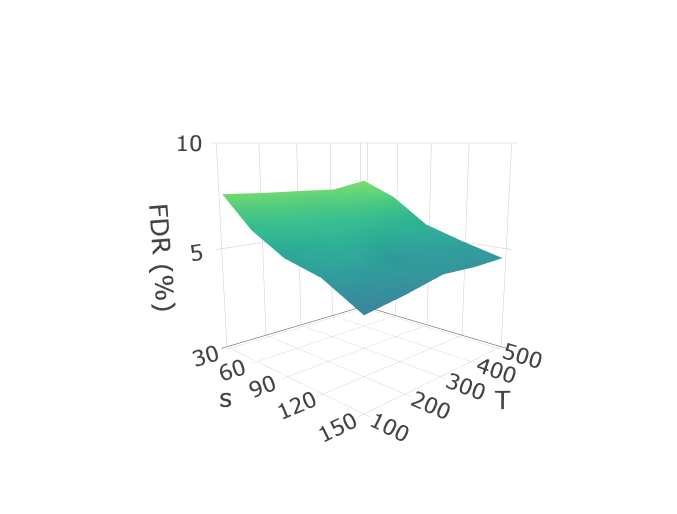}
  \end{minipage}
  \begin{minipage}[b]{0.32\linewidth}
    \centering
    \includegraphics[keepaspectratio,scale=0.26,bb=700 0 0 800]{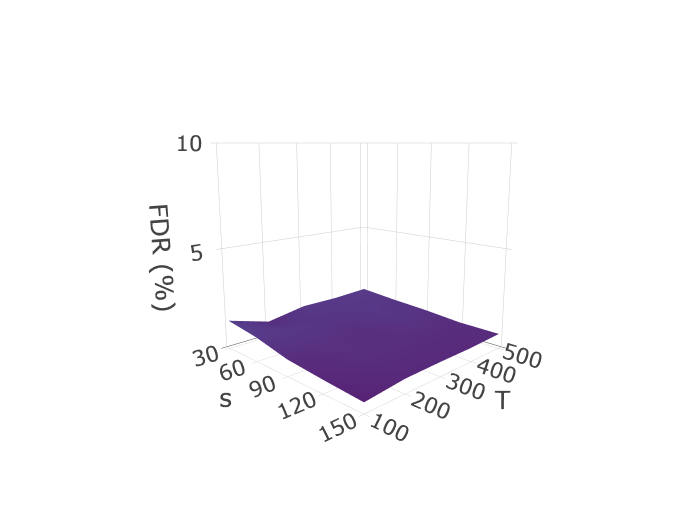}
  \end{minipage}
  \begin{minipage}[b]{0.32\linewidth}
    \centering
    \includegraphics[keepaspectratio,scale=0.26,bb=700 0 0 800]{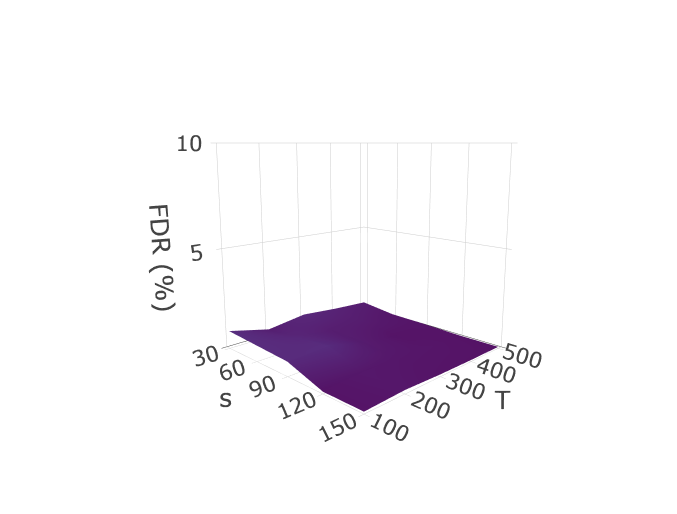}
  \end{minipage}
  \caption{FDR for $P_i^m$ (left), $P_i^mh_n$ (middle), and $P_i^e$ (right) for i.i.d. data.}
  \label{fig:FDR1}
\end{figure}

\begin{figure}[H]
  \begin{minipage}[b]{0.32\linewidth}
    \centering
\includegraphics[keepaspectratio,scale=0.26,bb=700 0 0 800]{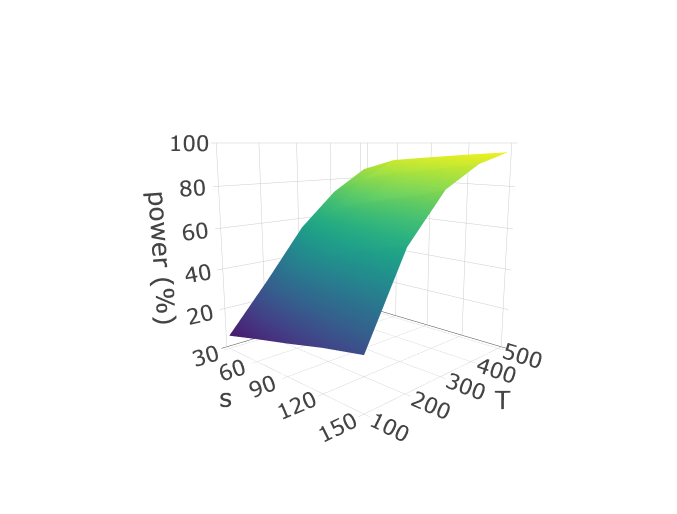}
  \end{minipage}
  \begin{minipage}[b]{0.32\linewidth}
    \centering
    \includegraphics[keepaspectratio,scale=0.26,bb=700 0 0 800]{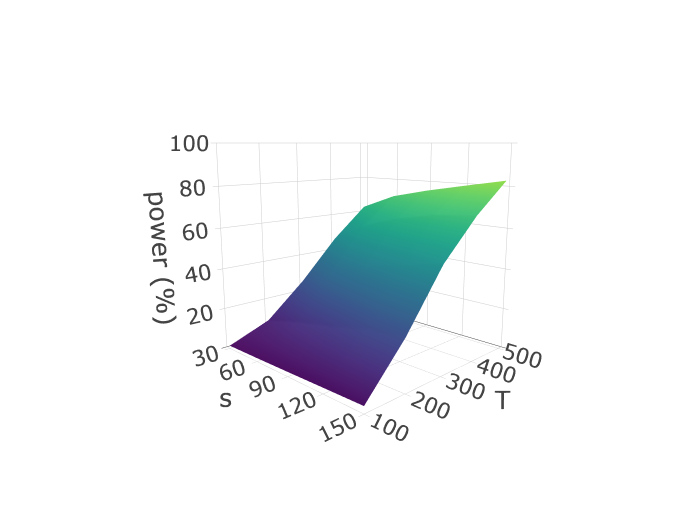}
  \end{minipage}
  \begin{minipage}[b]{0.32\linewidth}
    \centering
    \includegraphics[keepaspectratio,scale=0.26,bb=700 0 0 800]{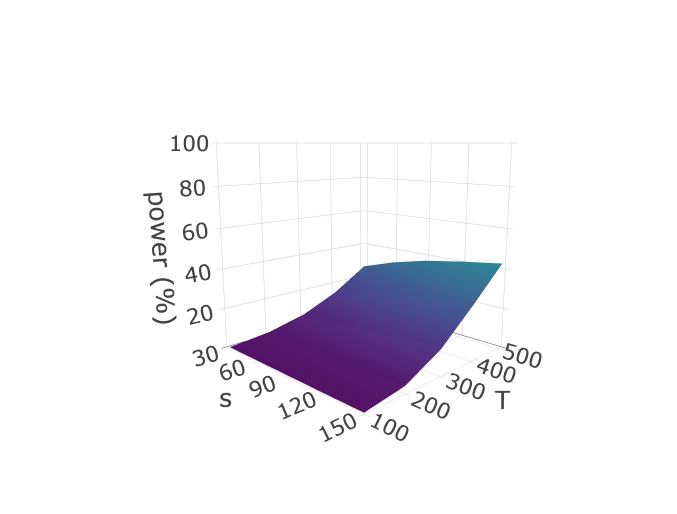}
  \end{minipage}
  \caption{Power for $P_i^m$ (left), $P_i^mh_n$ (middle), and $P_i^e$ (right) for i.i.d. data.}
  \label{fig:PWR1}
\end{figure}

\subsection{Logistic regression model}\label{logistic}

We next consider the following logistic model that has a complex dependence structure. For each $i\in[n]$ and $t\in[T]$, let
\begin{align}
    \xi_{it}^{\beta}:=\E[X_{it}\mid X_{t-1}]=\frac{1}{1+\exp\left\{-(\beta_{i0}+\beta_{i}^\top X_{t-1})\right\}},
\end{align}
where $(\beta_{i0},\beta_{i}^\top)^\top\in\bbR^{1+n}$ is a coefficient vector and $X_t=(X_{1t},\dots,X_{nt})^\top$. We constrain all the coefficients to be negative, and therefore $\xi_{it}^{\beta}$ takes the maximum $1/(1+\exp\{-\beta_{i0}\})$ when $X_{t-1}=(0,\dots,0)^\top$. Then our multiple testing problem reduces to $H_{0i}:1/(1+\exp\{-\beta_{i0}\})\leq\pi~\text{versus}~H_{1i}:1/(1+\exp\{-\beta_{i0}\})>\pi$, or equivalently,
\begin{align}
    H_{0i}:\beta_{i0}\leq\log\frac{\pi}{1-\pi}~~~\text{versus}~~~H_{1i}:\beta_{i0}>\log\frac{\pi}{1-\pi}.
\end{align}
In our simulation, for $i\in\cH_0(\pi)$, we set any coefficient to be $\log(\pi/(1-\pi))$ and for $i\in\cH_1(\pi)$, we set any coefficient to be $(2/3)\log(\pi/(1-\pi))$.  

The complex dependence structure of the model in both $i$ and $t$ makes the BH with $P_i^m$ have no FDR guarantee. However, as in Figure \ref{fig:FDR2}, it seems that the FDR is controlled under a quite low level. This is probably because the null distributions are conservative with high probability. On the other hand, Figure \ref{fig:PWR2} shows that $P_i^m$ is the most powerful.

\begin{figure}[H]
  \begin{minipage}[b]{0.32\linewidth}
    \centering
    \includegraphics[keepaspectratio,scale=0.26,bb=700 0 0 800]{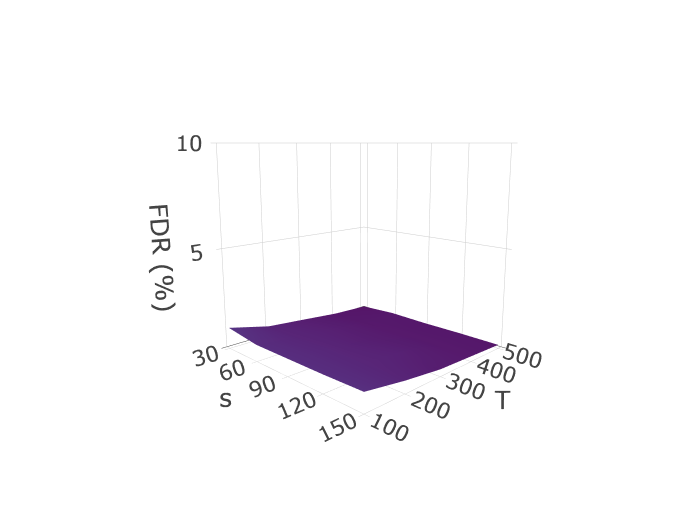}
  \end{minipage}
  \begin{minipage}[b]{0.32\linewidth}
    \centering
    \includegraphics[keepaspectratio,scale=0.26,bb=700 0 0 800]{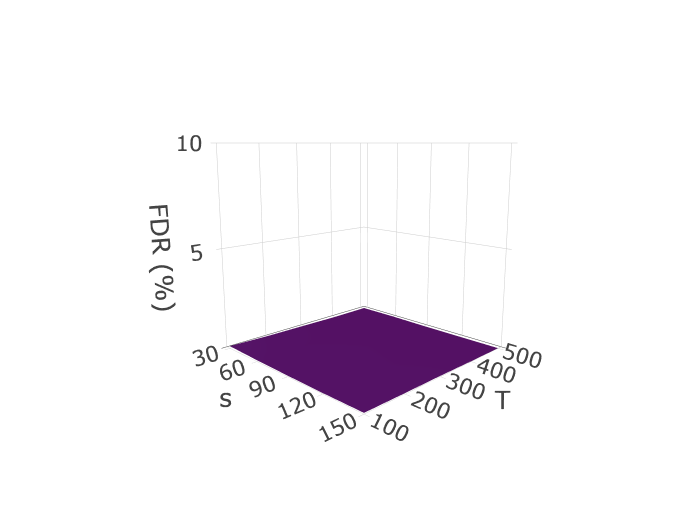}
  \end{minipage}
  \begin{minipage}[b]{0.32\linewidth}
    \centering
    \includegraphics[keepaspectratio,scale=0.26,bb=700 0 0 800]{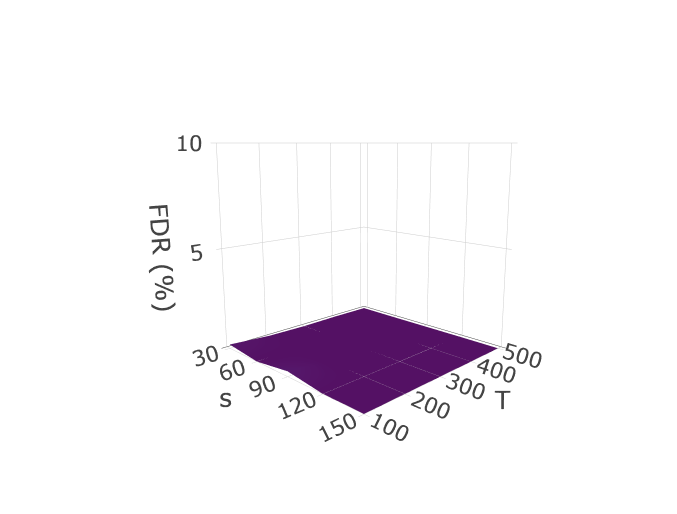}
  \end{minipage}
  \caption{FDR for $P_i^m$ (left), $P_i^mh_n$ (middle), and $P_i^e$ (right) for logistic model.}
  \label{fig:FDR2}
\end{figure}

\begin{figure}[H]
  \begin{minipage}[b]{0.32\linewidth}
    \centering
    \includegraphics[keepaspectratio,scale=0.26,bb=700 0 0 800]{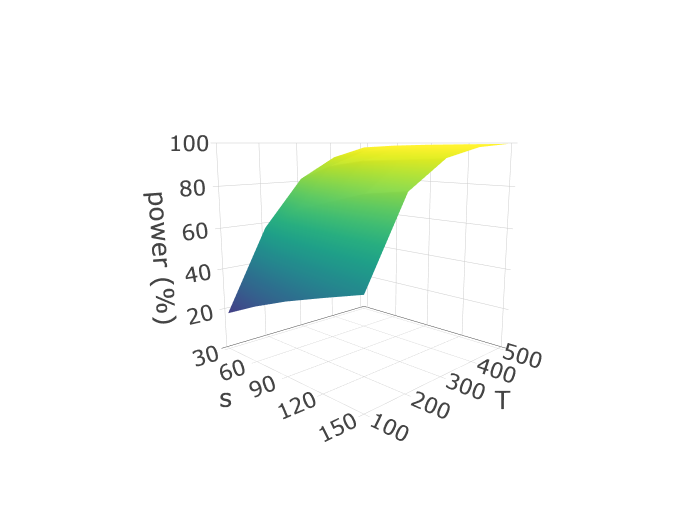}
  \end{minipage}
  \begin{minipage}[b]{0.32\linewidth}
    \centering
    \includegraphics[keepaspectratio,scale=0.26,bb=700 0 0 800]{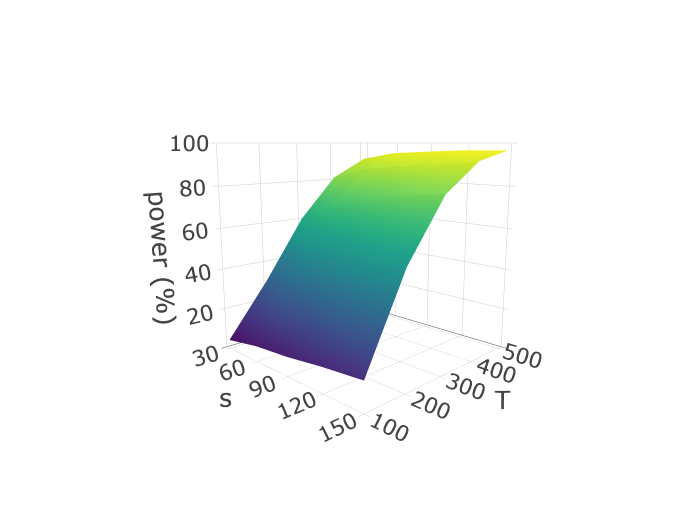}
  \end{minipage}
  \begin{minipage}[b]{0.32\linewidth}
    \centering
    \includegraphics[keepaspectratio,scale=0.26,bb=700 0 0 800]{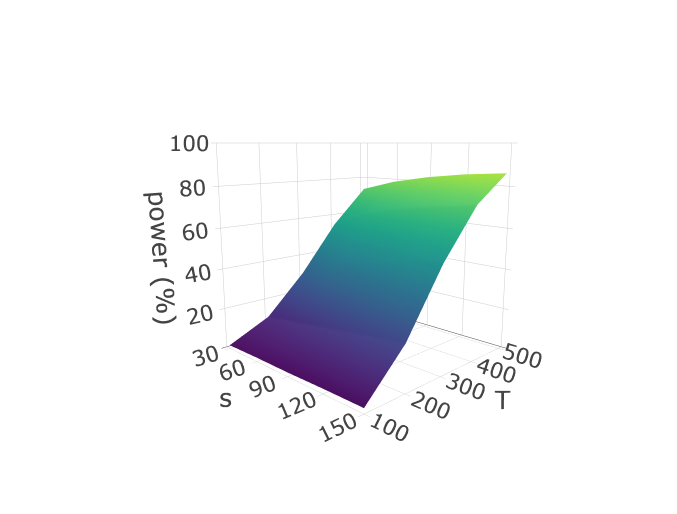}
  \end{minipage}
  \caption{Power for $P_i^m$ (left), $P_i^mh_n$ (middle), and $P_i^e$ (right) for logistic model.}
  \label{fig:PWR2}
\end{figure}

\subsection{Level shift model}\label{level}
One of the advantages of $P_i^e$ is that it can stop as soon as sufficient evidence against the null accumulates. To demonstrate this benefit, we consider the following lebel shift model. For each $i\in[n]$, we suppose
\begin{align}
    X_{it}\sim \text{indep.}
    \begin{cases}
        \text{Ber}(\pi_{ia}) &\text{for }~t\leq T_0, \\
        \text{Ber}(\pi_{ib}) &\text{for }~t>T_0,
    \end{cases}
\end{align}
where $\pi_{ia}$ and $\pi_{ib}$ are some constants in $(0,1)$. Then our multiple testing problem reduces to\begin{align}
    H_{0i}:\max\{\pi_{ia},\pi_{ib}\}\leq\pi~~~\text{versus}~~~H_{1i}:\max\{\pi_{ia},\pi_{ib}\}>\pi.
\end{align}
In our simulation, we set $T_0=30$. For $i\in\cH_0(\pi)$ and $i\in\cH_1(\pi)$, we set $(\pi_{ia},\pi_{ib})=(\pi,\pi/2)$ and $(\pi_{ia},\pi_{ib})=(0.5,\pi/2)$, respectively. This means that the strong signal will be observed in the first period.  As we expected, Figure \ref{fig:PWR3} shows that the BH procedure with $P_i^e$ is the most powerful. On the other hand, $P_i^m$ and $P_i^mh_n$ lead to the low power except for the case of $T=100$. This is because these $p$-values are based on the sample mean with all the observations, and not allowed to be stopped.

\begin{figure}[H]
  \begin{minipage}[b]{0.32\linewidth}
    \centering
    \includegraphics[keepaspectratio,scale=0.26,bb=700 0 0 800]{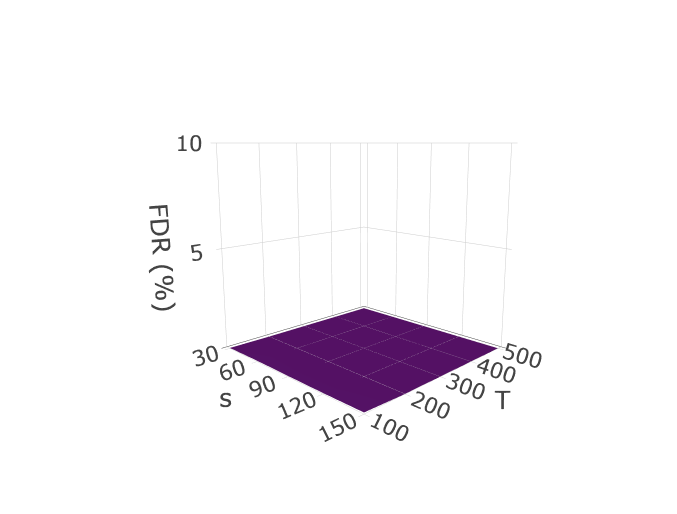}
  \end{minipage}
  \begin{minipage}[b]{0.32\linewidth}
    \centering
    \includegraphics[keepaspectratio,scale=0.26,bb=700 0 0 800]{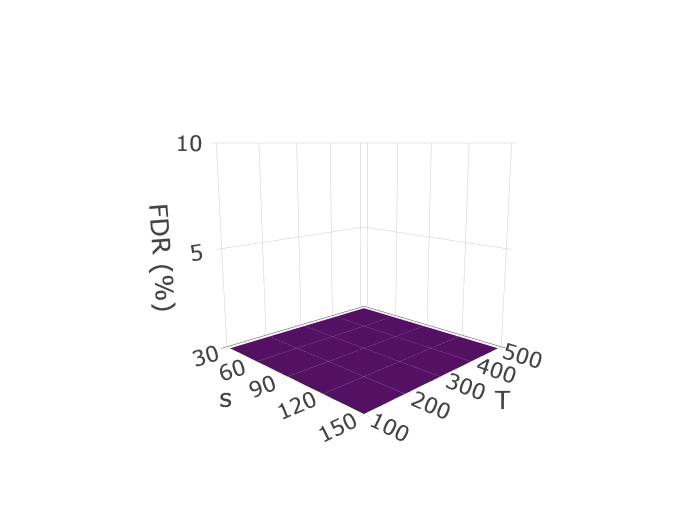}
  \end{minipage}
  \begin{minipage}[b]{0.32\linewidth}
    \centering
    \includegraphics[keepaspectratio,scale=0.26,bb=700 0 0 800]{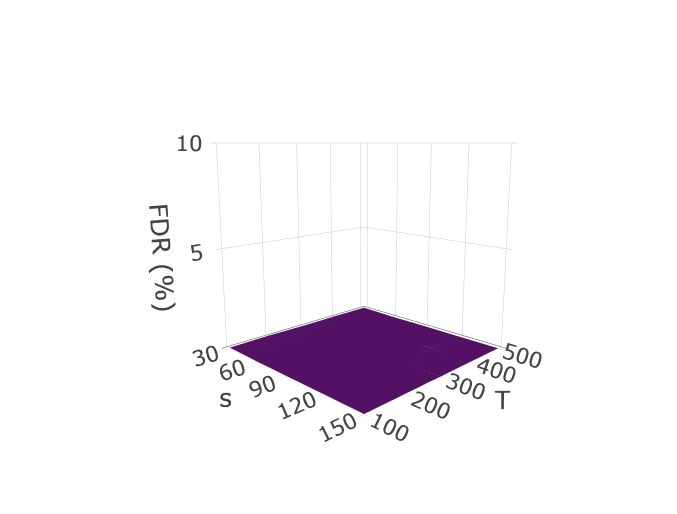}
  \end{minipage}
  \caption{FDR for $P_i^m$ (left), $P_i^mh_n$ (middle), and $P_i^e$ (right) for level shift model.}
  \label{fig:FDR3}
\end{figure}

\begin{figure}[H]
  \begin{minipage}[b]{0.32\linewidth}
    \centering
    \includegraphics[keepaspectratio,scale=0.26,bb=700 0 0 800]{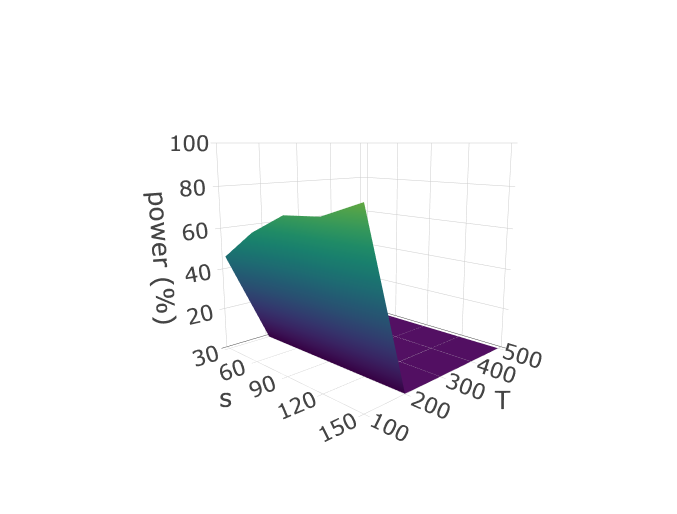}
  \end{minipage}
  \begin{minipage}[b]{0.32\linewidth}
    \centering
    \includegraphics[keepaspectratio,scale=0.26,bb=700 0 0 800]{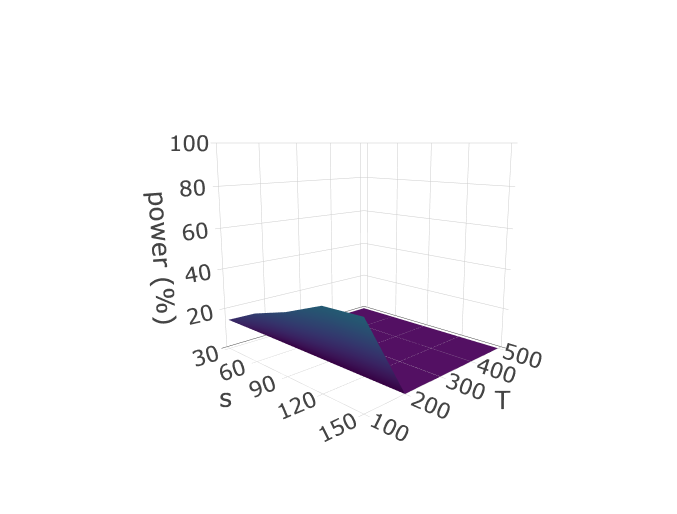}
  \end{minipage}
  \begin{minipage}[b]{0.32\linewidth}
    \centering
    \includegraphics[keepaspectratio,scale=0.26,bb=700 0 0 800]{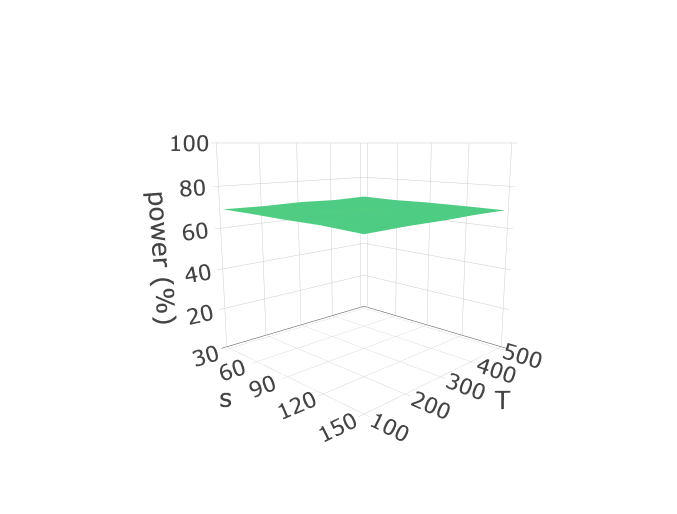}
  \end{minipage}
  \caption{Power for $P_i^m$ (left), $P_i^mh_n$ (middle), and $P_i^e$ (right) for level shift model.}
  \label{fig:PWR3}
\end{figure}

\subsection{Periodic data}\label{sec:periodic}
One of the advantages of our $e$-variable is that it can incorporate domain knowledge for power improvement. As an example, we assume that the edges in $\cH_1(\pi)$ are more likely to be connected periodically, and demonstrate the power gain if we know it. Periodic data often appear in many applications. For instance, the traffic volume is usually higher on weekends.

For each $i\in[n]$, we generate $X_{it}\sim\textit{Ber}(\pi_{it})$ independently, where the parameters are further sampled from the following: 
For $i\in\cH_0(\pi)$, let
\begin{align}
\pi_{it} \sim 
\begin{cases}
    U(0,0.05)   & \text{ for } t\in[T] \setminus \{5,10,15,\dots,T\};\\
    U(0.05,0.1) &\text{ for } t\in \{5,10,15,\dots,T\},
\end{cases}
\end{align}
whereas for $i\in\cH_1(\pi)$, let
\begin{align}
\pi_{it} \sim 
\begin{cases}
    U(0,0.05)   & \text{ for } t\in[T] \setminus \{5,10,15,\dots,T\};\\
    U(0.2,0.3) &\text{ for } t\in \{5,10,15,\dots,T\}.
\end{cases}
\end{align}

According to this distributional structure as the domain knowledge, we may construct the hyper parameter in $P_i^e$ as follows:
\begin{align}\label{knowledge}
    \lambda_{it}=
    \begin{cases}
0.1 & \text{ for } t\in[T] \setminus \{5,10,15,\dots,T\};\\
1.5 &\text{ for } t\in \{5,10,15,\dots,T\}.
\end{cases}
\end{align}
which can be a better alternative to \eqref{eq:lambda}. Since the resulting $e$-variable is obtained by the product of $\{e_{t-1}(X_{it})\}$, each should be as large as possible. To this end, small (large) $\lambda_{it}$ should be assigned when $X_{it}=0$ ($=1$) is expected. Indeed, as $\lambda_{it}$ gets larger, $e_{t-1}(0)\leq1$ gets smaller and $e_{t-1}(1)\geq1$ gets larger.

Figures \ref{fig:FDR4} and \ref{fig:PWR4} indicate that the BH applying to $\{P_i^m\}$, $\{P_i^mh_n\}$, and $\{P_i^e\}$ with \eqref{eq:lambda} find no edges, resulting in zero \% of the FDR and power for every $(s,T)$. On the contrary, Figure \ref{fig:periodebhknow} show that the e-BH along with $\lambda_{it}$ defined in \eqref{knowledge} has a high power especially when $T$ is large while the FDR is under control. This observation suggests that our $e$-variable has the potential to increase power by incorporating the domain knowledge.
\begin{figure}[H]
  \begin{minipage}[b]{0.32\linewidth}
    \centering
    \includegraphics[keepaspectratio,scale=0.26,bb=700 0 0 800]{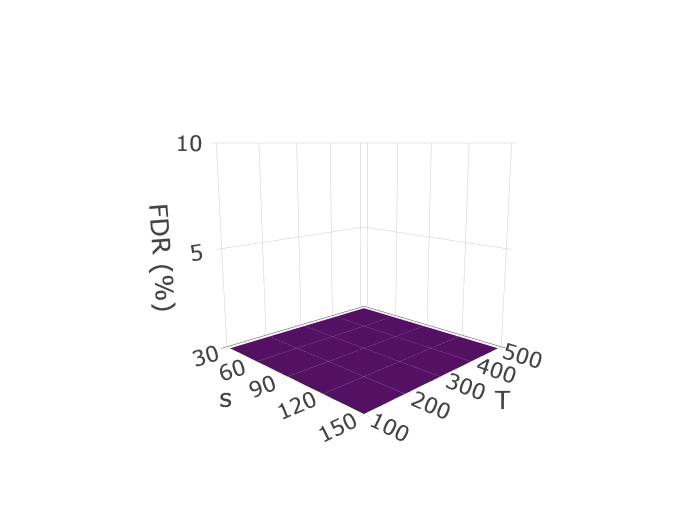}
  \end{minipage}
  \begin{minipage}[b]{0.32\linewidth}
    \centering
    \includegraphics[keepaspectratio,scale=0.26,bb=700 0 0 800]{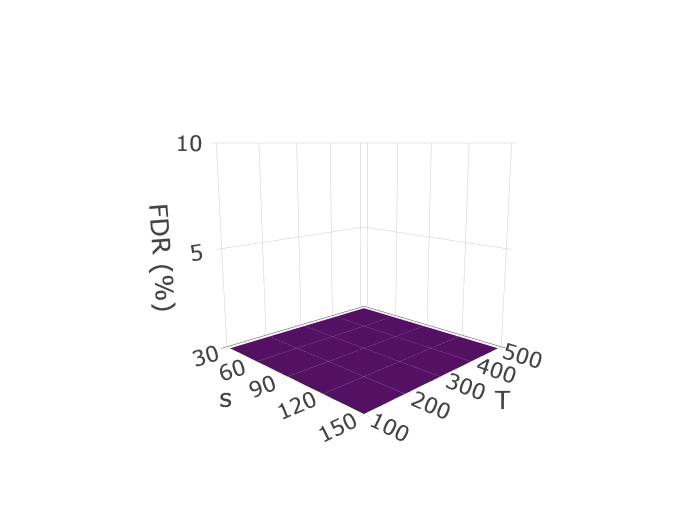}
  \end{minipage}
  \begin{minipage}[b]{0.32\linewidth}
    \centering
    \includegraphics[keepaspectratio,scale=0.26,bb=700 0 0 800]{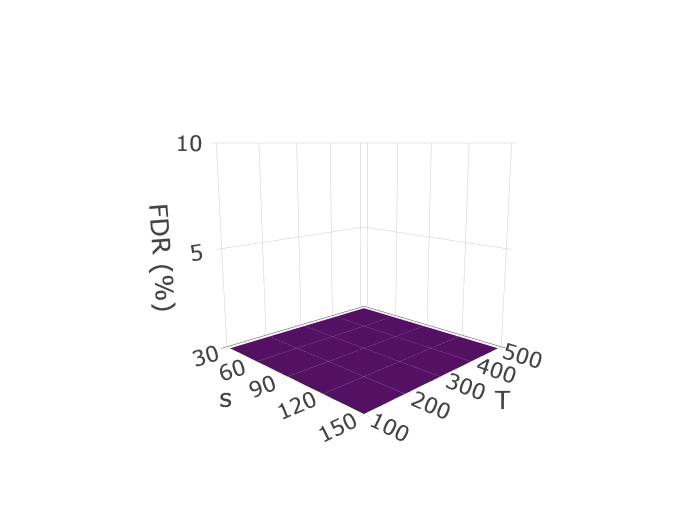}
  \end{minipage}
  \caption{FDR for $P_i^m$ (left), $P_i^mh_n$ (middle), and $P_i^e$ (right) without knowledge for periodic data.}
  \label{fig:FDR4}
\end{figure}

\begin{figure}[H]
  \begin{minipage}[b]{0.32\linewidth}
    \centering
    \includegraphics[keepaspectratio,scale=0.26,bb=700 0 0 800]{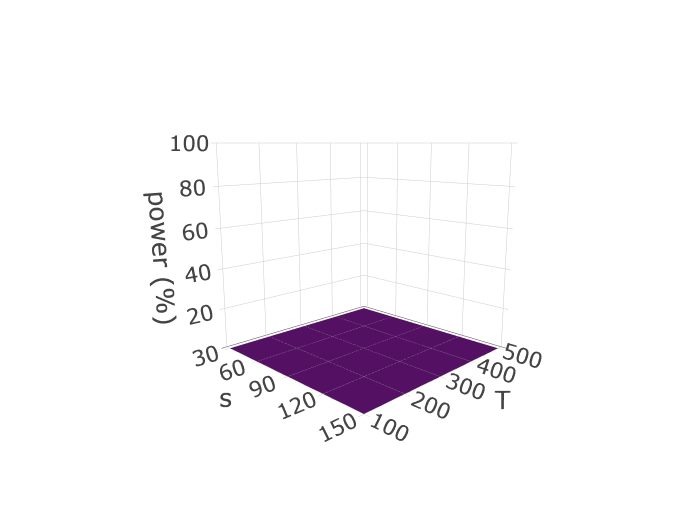}
  \end{minipage}
  \begin{minipage}[b]{0.32\linewidth}
    \centering
    \includegraphics[keepaspectratio,scale=0.26,bb=700 0 0 800]{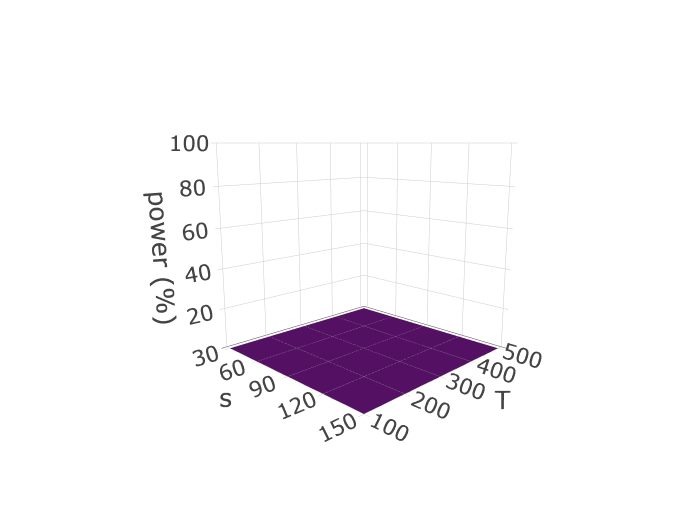}
  \end{minipage}
  \begin{minipage}[b]{0.32\linewidth}
    \centering
    \includegraphics[keepaspectratio,scale=0.26,bb=700 0 0 800]{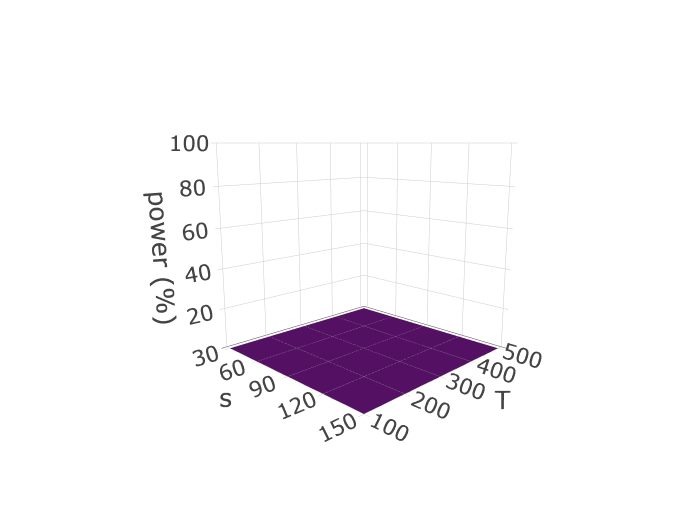}
  \end{minipage}
  \caption{Power for $P_i^m$ (left), $P_i^mh_n$ (middle), and $P_i^e$ (right) without knowledge for periodic data.}
  \label{fig:PWR4}
\end{figure}

\begin{figure}[H]
  \begin{minipage}[b]{0.5\linewidth}
    \centering
    \includegraphics[keepaspectratio,scale=0.4,bb=700 0 0 800]{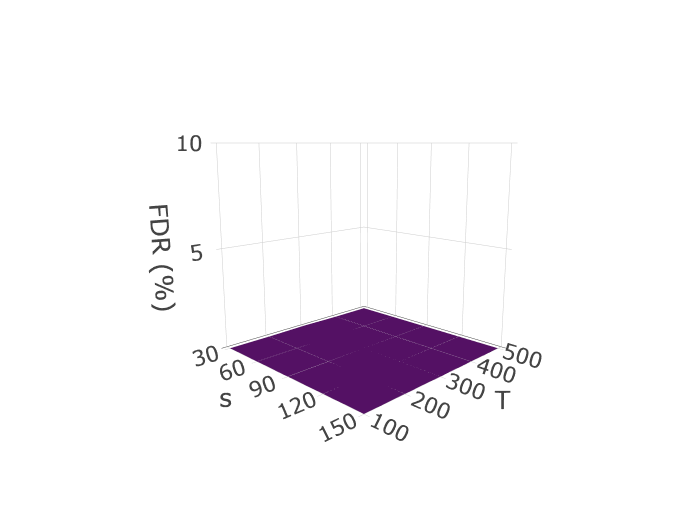}
  \end{minipage}
  \begin{minipage}[b]{0.5\linewidth}
    \centering
    \includegraphics[keepaspectratio,scale=0.4,bb=700 0 0 800]{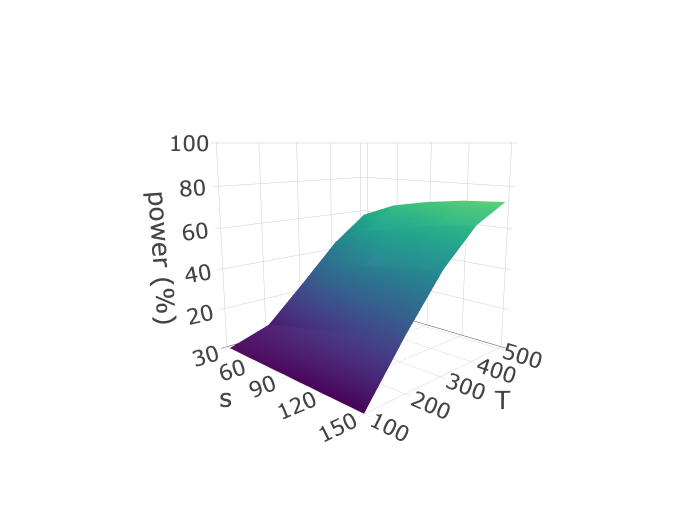}
  \end{minipage}
  \caption{FDR (left) and power (right) for $P_i^e$ using knowledge of periodic data. }
  \label{fig:periodebhknow}
\end{figure}

\section{Application to SNS data}\label{sec:Real data}
Suppose that we have time series of SNS data, which record who sent and received the massages, observed per day. Thus the SNS data can be represented by a time series network which contains $n$ (the number of the pairs of a sender and receiver) potential edges  for each day $t\in[T]$. For each $(i,t)\in[n]\times[T]$, the $i$th edge is connected at day $t$ if any massage is sent between a certain pair of the sender and receiver. In this case, we denote $X_{it}=1$ and otherwise $X_{it}=0$. 

We use the data from \cite{panzarasa2009patterns}, where $n=1899^2$ and $T=195$. To study the dynamics of SNS, we are interested in detecting the $\pi$-connected network. We apply the BH (with $P_i^m$), BY (i.e., BH with $P_i^mh_n$), and e-BH (i.e., BH with $P_i^e$) for $\pi\in\{0.01,0.02\}$ and $\alpha=0.1$. As for $P_i^e$, we use $\lambda_{it}$ given in \eqref{eq:lambda} with $\bar{\lambda}=1/\pi-0.01$. 

For many human communication dynamics, it is known that a distribution of a time interval between two consecutive events tend to have a heavy-tail \citep{vazquez2007impact,holme2012temporal}. Such an observation implies that a very long time is often needed before the same edge connects again. Therefore, even for $i\in\cH_1(\pi)$, the sample proportion of the occurrence of the edge $i$ can be much smaller than $\pi$. In this case, the $p$-value for $H_{0i}$ will be large, and result in low power. On the other hand, thanks to the optional stopping property, $e$-value-based procedures will be a promising option since it can reject $H_{0i}$ as soon as the evidence against it accumulates.

The results are shown in Figures \ref{fig:bh}, \ref{fig:by} and \ref{fig:ebh}, We observe that the detected network gets sparser as $\pi$ becomes larger. In contrast to BY and e-BH, the FDR of BH may not be controlled because there can be a complex dependence structure among edges. Remarkably, the e-BH not only controls the FDR, it also detects the most edges as we expected. This could be explained by the optional stopping property. In fact, the time periods for each edge to be connected are clustered overall.

\begin{figure}[H]
  \begin{minipage}[b]{0.5\linewidth}
    \centering
    \includegraphics[keepaspectratio,scale=0.5,bb=-50 0 300 300]{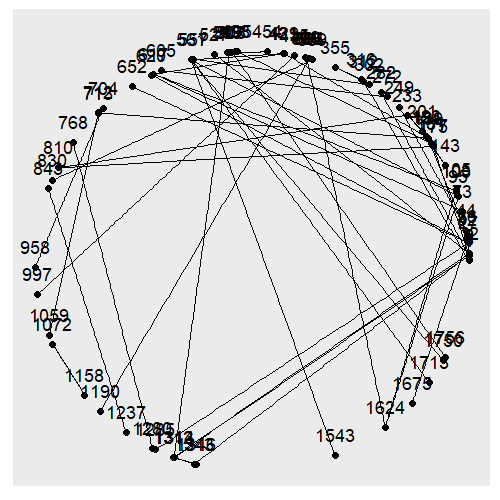}
  \end{minipage}
  \begin{minipage}[b]{0.5\linewidth}
    \centering
    \includegraphics[keepaspectratio,scale=0.5,bb=-50 0 300 300]{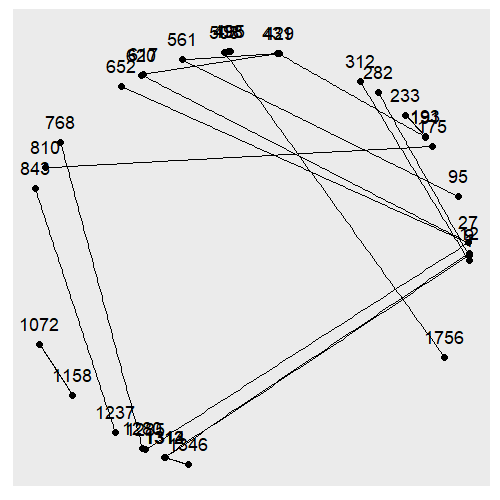}
  \end{minipage}
  \caption{Detected networks by BH with $\pi=0.01$ (left) and $\pi=0.02$ (right).}
  \label{fig:bh}
\end{figure}

\begin{figure}[H]
  \begin{minipage}[b]{0.5\linewidth}
    \centering
    \includegraphics[keepaspectratio,scale=0.5,bb=-50 0 300 300]{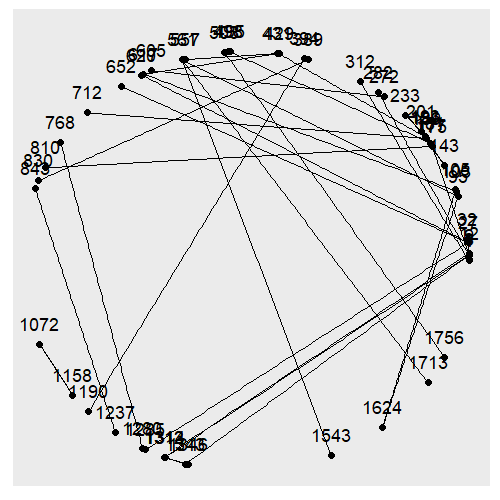}
  \end{minipage}
  \begin{minipage}[b]{0.5\linewidth}
    \centering
    \includegraphics[keepaspectratio,scale=0.5,bb=-50 0 300 300]{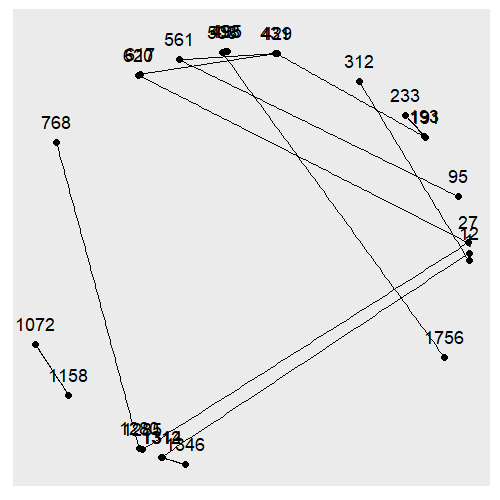}
  \end{minipage}
  \caption{Detected networks by BY with $\pi=0.01$ (left) and $\pi=0.02$ (right).}
  \label{fig:by}
\end{figure}

\begin{figure}[H]
  \begin{minipage}[b]{0.5\linewidth}
    \centering
    \includegraphics[keepaspectratio,scale=0.5,bb=-50 0 300 300]{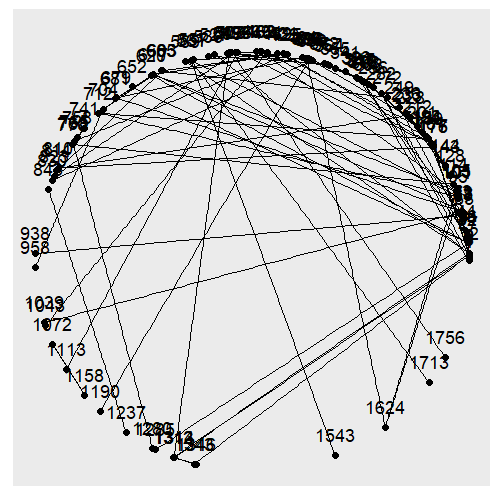}
  \end{minipage}
  \begin{minipage}[b]{0.5\linewidth}
    \centering
    \includegraphics[keepaspectratio,scale=0.5,bb=-50 0 300 300]{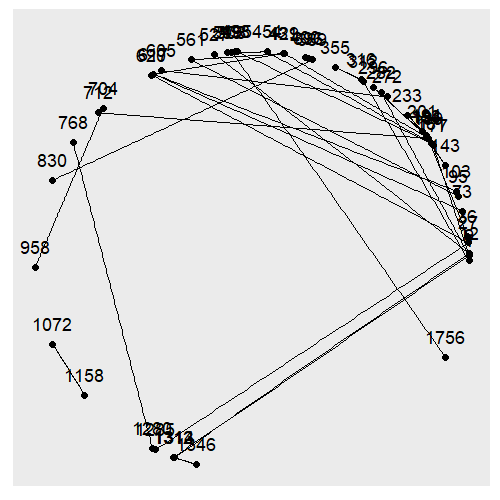}
  \end{minipage}
  \caption{Detected networks by e-BH with $\pi=0.01$ (left) and $\pi=0.02$ (right).}
  \label{fig:ebh}
\end{figure}

\section{Conclusion}\label{sec:Conclusion}

In this paper, we propose a methodology of exploratory data analysis for discovering relationships between variables, which are conceptualized as edges between nodes for visual understanding, in high dimension. While there are various methods for selecting important variables, methodologies for uncovering connectable edges and the networks they form are unprecedented. Networks discovered by our method are worth investigating in more detail. This leads to more efficient analysis to complex and high-dimensional network data, and more meaningful implication for building a new network science. 

We summarize our contributions here. We first define the $\pi$-connectable edges and network, and set them as our object of detection. Then we formalize the detection problem as a multiple testing. To implement it, we construct $p$-variables $P_i^m$, $P_i^mh_n$, and $P_i^e$, which is our main methodological contribution, and apply the BH to them. For $P_i^mh_n$ and $P_i^e$, we show that FDR is controlled under arbitrary dependence structure, thus we can discover edges with reproducibility via FDR control. Moreover, under mild conditions, we also show that the power tends to unity. We demonstrate the performances of each $p$-variable by simulations and real data example. If we have some knowledge of the network, the use of $P_i^e$ is recommended since it can incorporate the knowledge flexibly.

Finally, we indicate a few directions for future studies. (i) We have restricted our attention to network whose nodes are non-random and edges are realized as binary variables. However, recently we can obtain network data with random nodes and/or real-valued edges such as trade networks. The extension of our method for such more complex networks is interesting. (ii) The $\pi$-connectable edge is defined by the $\ell_\infty$-norm of $\xi_{it}$, meaning that they are connectable at least a certain period. However, one may want to consider edges likely to be connected over entire periods. Thus the extension to using another norm, such as $\ell_2$-norm, is meaningful. (iii) We have to mention about the potential to improve the power by incorporating the dependence structure among edges. We often have some knowledge about it, such as a block or hierarchical structure. Although our $p$-variables are constructed marginally, it may be possible to construct new $p$-variables by using multiple binary sequences jointly in such cases.

\newpage
\appendix
\section{Appendix: Proofs of the Theoretical Results}

Throughout the proofs, we omit writing a.s.\ when it is clear from the context.

\subsection{Proof of Theorem \ref{e-value}}

\begin{proof}[Proof of Theorem \ref{e-value}.]
First, we show that $e(X_{it};\cF_{t-1})=\lambda_{it}X_{it}+1-\pi\lambda_{it}$, where $\lambda_{it}$ is any $[0,1/\pi)$-valued $\cF_{t-1}$-measurable random variable for $t=1,\dots,T$,  is a valid sequence of  $e$-variables. 
We check if Condition (A) holds. By the definition, we have 
\begin{align}
e_{t-1}(0) = 1-\pi\lambda_{it}, ~~~
e_{t-1}(1) = 1-\pi\lambda_{it} + \lambda_{it} = e_{t-1}(0)+\lambda_{it}.
\end{align}
Because $\lambda_{it}$ takes the values in  $[0,1/\pi)$,  we have $e_{t-1}(0)>0$ and $e_{t-1}(0) \leq e_{t-1}(1)$, which verifies (A). We check if (B) and (C) are true. Since $0\leq\xi_{it}\leq \pi$ for any $i\in\cH_0(\pi)$, we have
\begin{align}
\E\left[e_{t-1}(X_{it})\mid\cF_{t-1}\right] 
&= \lambda_{it}\xi_{it} + 1- \pi\lambda_{it} 
\leq \lambda_{it}\pi + 1- \pi\lambda_{it} =1, 
\end{align}
giving (B). In this inequality, the equality holds when $\{\xi_{it}=\pi\}$ occurs. Therefore, (C) holds.

Next, we show that any valid $e$-variable can be represented as $e_{t-1}(X_{it})=\lambda_{it}X_{it}+1-\pi\lambda_{it}$ for some $\cF_{t-1}$-measurable $\lambda_{it}\in[0,1/\pi)$. 
Without loss of generality, we suppose that $x\mapsto e_{t-1}(x)$ is linear; that is, put $e_{t-1}(x)=a_t x + b_t $ for some $\cF_{t-1}$-measurable random variables $a_t,b_t$. 
By Condition (A), it must be that $0<b_t\leq a_t+b_t$, which means $a_t\geq0$ and $b_t>0$. Moreover, by (B), we must have
\begin{align}
\E\left[e_{t-1}(X_{it})\mid\cF_{t-1}\right]=\xi_{it}a_t+b_t \leq 1.
\end{align}
Since $a_t$ is nonnegative, this is maximized to be $1$ under the event $\{\xi_{it}=\pi\}$, leading to $\pi a_t+b_t=1$. Furthermore, $b_t>0$ implies that $a_t<1/\pi$. Denoting $a_t$ as $\lambda_{it}$, we have $b_t=1-\pi\lambda_{it}$.

Finally, by \cite{grunwald2020safe}, we claim that $E_{i}=E_{i,T\land\tau_i(\pi)}$ is indeed an $e$-variable for any $\pi\in(0,1)$, $i\in\cH_0(\pi)$, and a stopping time $\tau_i$.  
This completes the proof.
\end{proof}

\subsection{Proof of Theorem \ref{misspecifiedp-value}}

\begin{proof}[Proof of Theorem \ref{misspecifiedp-value}]
Let $X_{it}$ denote an edge  with $i\in\cH_0(\pi)$ and $Y_1,\dots,Y_T\sim \text{i.i.d.}\ Ber(\pi)$. By their definitions, for any $(i,t)\in\cH_0(\pi)\times[T]$ and $c\in(0,1]$, we have 
\begin{align}
\Pro(X_{it}\geq c\mid\cF_{t-1})&=\Pro(X_{it}=1\mid\cF_{t-1}) \\
&\leq\pi=\Pro(Y_t=1)=\Pro(Y_t\geq c)
=\Pro(Y_t\geq c\mid\cF_{t-1}).
\end{align}
This is also true for any $c\leq 0$ and $c>1$ with equality since their probabilities are trivially equal to $1$ and $0$. Thus for any $(i,t)\in\cH_0(\pi)\times[T]$ and $c\in\mathbb{R}$, we obtain 
\begin{align}
\Pro\left(X_{it}\geq c\mid\cF_{t-1}\right)\leq\Pro\left(Y_t\geq c\mid\cF_{t-1}\right).\label{ineq:2}
\end{align}

Let $S_i = \sum_{t=1}^{T}X_{it}$ and $S = \sum_{t=1}^{T}Y_{t}$. 
By the inequality in \eqref{ineq:2} and the law of iterated expectations, we have
    \begin{align}\label{stochasticallylarger}
        \Pro\left(S_i\geq c\right)
        &=\E\left[\Pro\left(X_{iT}\geq c-X_{i1}-\dots-X_{i,T-1}\mid X_{i1},\dots,X_{i,T-1}\right)\right]\\
        &\leq\E\left[\Pro\left(Y_{T}\geq c-X_{i1}-\dots-X_{i,T-1}\mid X_{i1},\dots,X_{i,T-1}\right)\right]\\
        &=\Pro\left(Y_{T}\geq c-X_{i1}-\dots-X_{i,T-1}\right)\\
        &=\E\left[\Pro\left(X_{i,T-1}\geq c-X_{i1}-\dots-X_{i,T-2}-Y_T\mid X_{i1},\dots,X_{i,T-2},Y_T\right)\right]\\
        &\leq\E\left[\Pro\left(Y_{T-1}\geq c-X_{i1}-\dots-X_{i,T-2}-Y_T\mid X_{i1},\dots,X_{i,T-2},Y_T\right)\right]\\
        &=\Pro\left(Y_{T-1}+Y_T\geq c-X_{i1}-\dots-X_{i,T-2}\right).
    \end{align}
Repeating this manipulation results in
\begin{align}
\Pro\left(S_i\geq c\right)
\leq
\Pro\left(S\geq c\right). \label{ineq:1}
\end{align}

Recall $P_{i}^{\normalfont{\textsf{m}}}
= \sum_{t=S_i}^{T}\binom Tt
\pi^t(1-\pi)^{T-t}$, which was defined in \eqref{p-value}, and denote by $F_{1-\alpha}=\min\left\{x:\Pro\left(S\geq x\right)\leq\alpha\right\}$ the upper $\alpha$-quantile of $S\sim \text{Bin}(T,\pi)$. Then we obtain 
\begin{align}
\Pro\left(P_{i}^{\normalfont{\textsf{m}}}\leq\alpha\right)
&=\Pro\left(\sum_{t=S_i}^{T}
\binom Tt
\pi^t(1-\pi)^{T-t}\leq\alpha\right)\\
&=\Pro\left(S_i\geq F_{1-\alpha}\right)
\leq\Pro\left(S\geq F_{1-\alpha}\right)
\leq\alpha.
\end{align}
where the first inequality holds by \eqref{ineq:1}. This completes the proof.  
\end{proof}

\subsection{Proof of Theorem \ref{thm:fdr}}

\begin{proof}[Proof of Theorem \ref{thm:fdr}]
The result immediately follows since $\{P_i^e\}$ is a set of $p$-variables that leads to the procedure of \cite{wang2022false}.
\end{proof}

\subsection{Proof of Theorem \ref{thm:fdr2}}

\begin{proof}[Proof of Theorem \ref{thm:fdr2}]
The result immediately follows since $\{P_i^mh_n\}$ is a set of $p$-variables that leads to the procedure of \cite{benjamini2001control}.  
\end{proof}

\subsection{Proof of Theorem \ref{power}}

\begin{proof}[Proof of Theorem \ref{power}]
Recall $\what\cH_1(\pi)=\{i\in[n]:E_{i}\geq E_{(K)}\}$. For any $i$ such that $E_i\geq n/\alpha$, we have $i\in \what\cH_1(\pi)$. This gives 
\begin{align}
\Pwr_T(\pi)
&=\E\left[\frac{|\cH_1(\pi)\cap\what{\cH}_1(\pi)|}{|\cH_1(\pi)|}\right]
= \frac{1}{|\cH_1(\pi)|}\sum_{i\in \cH_1(\pi)} \Pro \left(i\in\what{\cH}_1(\pi) \right) \\
&\geq \frac{1}{|\cH_1(\pi)|}\sum_{i\in \cH_1(\pi)}\Pro \left(E_{i,T\land\tau_i}\geq\frac{n}{\alpha} \right) \\
&= \frac{1}{|\cH_1(\pi)|}\sum_{i\in \cH_1(\pi)}\Pro \left(\tau_i<\infty \right).
\end{align}
By Condition \ref{powercondition}, we have $\Pro \left(\tau_i<\infty \right)\to 1$ for all $i\in \cH_1(\pi)$, which makes the power tend to unity. This completes the proof. 
\end{proof}

\subsection{Proof of Theorem \ref{lln}}

\begin{proof}[Proof of Theorem \ref{lln}]
Fix arbitrary $i\in\cH_1(\pi)$ throughout the proof. Let $W_{is} = U_{is} + V_{is}$ with
\begin{align}
U_{is} = \log(\lambda_{is}+1-\pi\lambda_{is})-\log(\lambda_i+1-\pi\lambda_i), ~~~
V_{is} = \log(1-\pi\lambda_{is})-\log(1-\pi\lambda_i). 
\end{align}
Condition \ref{cond:2} yields
\begin{align}
\frac{1}{T}\log E_{iT}
&= \frac{1}{T}\sum_{s=1}^{T}X_{is}\log(\lambda_{is}+1-\pi\lambda_{is})+\frac{1}{T}\sum_{s=1}^{T}(1-X_{is})\log(1-\pi\lambda_{is}) \\
&= \bar{X}_{iT}\log(\lambda_{i}+1-\pi\lambda_{i})
+ (1-\bar{X}_{iT})\log(1-\pi\lambda_{i}) + \frac{1}{T}\sum_{s=1}^{T}X_{is}W_{is} \\
&= f(\lambda_i) + o_p(1) + \frac{1}{T}\sum_{s=1}^{T}X_{is}W_{is},
\end{align}
where
\begin{align}
f(x) = \pi_i \log\left(\frac{x+1-\pi x}{1-\pi x}\right) + \log\left(1-\pi x\right).
\end{align}

We will claim $(1/T)\sum_{s=1}^{T}X_{is}W_{is}\xrightarrow{p}0$ ($*$). For any $\epsilon>0$, we have 
\begin{align}
\Pro\left(\left|\frac{1}{T}\sum_{s=1}^{T}X_{is}W_{is}\right|>\epsilon\right)
&\leq \Pro\left(\frac{1}{T}\sum_{s=1}^{T}\left|W_{is}\right| >\epsilon\right)\\
&\leq \Pro\left(\frac{1}{T}\sum_{s=1}^{T}\left|U_{is}\right|>\frac{\epsilon}{2}\right) 
+ \Pro\left(\frac{1}{T}\sum_{s=1}^{T}\left|V_{is}\right|>\frac{\epsilon}{2}\right)\\
&\leq\frac{2}{\epsilon T}\sum_{s=1}^{T}\E[|U_{is}|]
+\frac{2}{\epsilon T}\sum_{s=1}^{T}\E[|V_{is}|].
\end{align}
By Condition \ref{cond:4} and the continuous mapping theorem, we have $|U_{is}|\xrightarrow{p}0$ and $|V_{is}|\xrightarrow{p}0$. 
Thus Lemma \ref{lem:UI} below entails $\E[|U_{is}|]\to0$ and $\E[|V_{is}|]\to0$ as $s\to\infty$. Therefore, the property of Ces\`aro mean implies ($*$). 
In consequence, we have
\begin{align}
\frac{1}{T}\log E_{iT} = f(\lambda_i) + o_p(1).
\end{align}

Note that $x\mapsto f(x)$ is a monotonically increasing and continuous function in $x\in[0,\tilde{\lambda}]$ with $f(0)=0$ and $f(\tilde{\lambda})=\max_{x\geq 0}f(x)$. Thus $f(\lambda_i)>0$ uniformly in $\lambda_i\in(0,\lambda_i^*]$.  This together with Condition \ref{cond:3} implies Condition \ref{powercondition}:
\begin{align}
    \Pro(\tau_i<\infty)
    &\geq \Pro\left(\frac{1}{T}\log E_{iT}\geq\frac{1}{T}\log\frac{n}{\alpha}\right) \\
    &\geq \Pro\left(\left|\frac{1}{T}\log E_{iT}-f(\lambda_i)\right|\leq f(\lambda_i)-O\left(\frac{\log n}{T}\right)\right)\\
    &\to1.
\end{align}
This completes the proof.
\end{proof}

\begin{lem}\label{lem:UI}
If Condition \ref{cond:4} is satisfied, then for any $i\in\cH_1(\pi)$ with given $\pi\in(0,1)$, $\{|U_{is}|\}_{s}$ and $\{|V_{is}|\}_{s}$ are uniformly bounded.
\end{lem}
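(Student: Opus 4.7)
The plan is to exhibit deterministic constants $C_U, C_V$ such that $|U_{is}|\leq C_U$ and $|V_{is}|\leq C_V$ almost surely for every $s$. Since both quantities are differences of two logarithms, the natural tactic is to bound each logarithm separately from above and below using the admissibility constraints on $\lambda_{is}$ and $\lambda_i$, and then apply the triangle inequality.

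For $|U_{is}|$ I would rewrite $\lambda_{is}+1-\pi\lambda_{is}=1+(1-\pi)\lambda_{is}$ and note that, by the admissibility of $\lambda_{is}$ in Theorem \ref{e-value}, $\lambda_{is}\in[0,1/\pi)$ forces this argument into $[1,1/\pi)$. Hence $\log(\lambda_{is}+1-\pi\lambda_{is})\in[0,-\log\pi)$, and the same bound applies to $\log(\lambda_i+1-\pi\lambda_i)$ because Condition \ref{cond:4} places $\lambda_i$ in $(0,1/\pi)$. The triangle inequality then gives $|U_{is}|\leq -2\log\pi$ uniformly in $s$.

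The analogous step for $|V_{is}|$ is where the real work lies: we need to bound $\log(1-\pi\lambda_{is})$ away from $-\infty$, which requires $\lambda_{is}$ to stay uniformly away from $1/\pi$. This is precisely the role of the cap $\bar{\lambda}<1/\pi$ built into the data-adaptive construction in Proposition \ref{prop:lambda}, which forces $\lambda_{is}\leq\bar{\lambda}$ almost surely and, via $\lambda_i=\min\{\lambda_i^*,\bar{\lambda}\}$, also $\lambda_i\leq\bar{\lambda}$. Under this cap, $1-\pi\lambda_{is}\geq 1-\pi\bar{\lambda}>0$, so $|\log(1-\pi\lambda_{is})|\leq -\log(1-\pi\bar{\lambda})$ and similarly for $|\log(1-\pi\lambda_i)|$. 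The triangle inequality then yields $|V_{is}|\leq -2\log(1-\pi\bar{\lambda})$, again uniformly in $s$.

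The main obstacle is conceptual rather than technical: one has to recognize that the generic admissibility $\lambda_{is}\in[0,1/\pi)$ from Theorem \ref{e-value} is by itself \emph{not} sufficient to keep $V_{is}$ bounded, since $\log(1-\pi\lambda_{is})$ can blow up as $\lambda_{is}\uparrow 1/\pi$ along a null set of $s$. The resolution is to invoke the uniform cap $\bar{\lambda}<1/\pi$ that is hard-wired into the explicit $\lambda_{it}$ used for the power analysis. Once this cap is in force, both bounds are immediate consequences of monotonicity of the logarithm and the triangle inequality, and Lemma \ref{lem:UI} follows.
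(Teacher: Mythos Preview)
Your proposal is correct and follows essentially the same route as the paper: split each $|U_{is}|$ and $|V_{is}|$ by the triangle inequality into two logarithmic terms, then bound each term by confining the argument of the logarithm to a compact subset of $(0,\infty)$ via the constraints $\lambda_{is}\in[0,\bar{\lambda}]$ and $\lambda_i\in(0,\lambda_i^*]\subset(0,1/\pi)$. Your version is in fact more explicit than the paper's (you give concrete constants $-2\log\pi$ and $-2\log(1-\pi\bar{\lambda})$ rather than merely arguing non-divergence), and you are more careful in flagging that the uniform cap $\lambda_{is}\leq\bar{\lambda}$ really comes from the construction in Proposition~\ref{prop:lambda} and not from Condition~\ref{cond:4} as literally stated---a point the paper glosses over by writing ``$\lambda_{it}\in[0,\bar{\lambda}]$ by Condition~\ref{cond:4}.''
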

\begin{proof}[Proof of Lemma \ref{lem:UI}]
Fix arbitrary $i\in\cH_1(\pi)$ throughout the proof. 
%
From the proof of Theorem \ref{lln}, we have
\begin{align}
|U_{is}| 
\leq |\log(\lambda_{is}+1-\pi\lambda_{is})| + |\log(\lambda_i+1-\pi\lambda_i)|.
\end{align}
The first term diverges only if $\lambda_{is}+1-\pi\lambda_{is}\to 0$ (i.e., $\lambda_{is}\to -1/(1-\pi)<0$) or $\to \infty$ in probability, but either does not occur since $\lambda_{it}\in[0,\bar{\lambda}]$ by Condition \ref{cond:4}. The same is said to the second term; in fact, $\lambda_{i}\in(0,\lambda_i^*]$ with $\lambda_i^*=(\pi_i-\pi)/\{\pi(1-\pi)\}<1/\pi$ by Condition \ref{cond:4}. Thus, $|U_{is}|$ is uniformly bounded by some constant. Next, we have
\begin{align}
|V_{is}| \leq |\log(1-\pi\lambda_{is})| + |\log(1-\pi\lambda_i)|. 
\end{align}
This diverges only if either of the following occurs: $\lambda_{is}\to 1/\pi$, $\lambda_{is}\to \infty$, $\lambda_{i}\to 1/\pi$, or $\lambda_{i}\to \infty$ in probability. However, neither happens under Condition \ref{cond:4}. To see this, we note that $\lambda_{is}\leq \bar{\lambda}<1/\pi$ for all $s$ and also $\lambda_i\leq \bar{\lambda}$, which holds by $\lambda_{is}\to \lambda_i$ in probability. Thus, $|V_{is}|$ is also uniformly bounded by some constant. This completes the proof. 
\end{proof}

\subsection{Proof of Theorem \ref{pwr2}}

\begin{proof}[Proof of Theorem \ref{pwr2}]
We will show that $\Pwr_T(\pi)\to 1$ for the BH with $P_{i}^{\normalfont{\textsf{m}}}$'s. First, by the definition of the power, note that 
\begin{align}
\Pwr_T(\pi)
&=\frac{1}{|\cH_1(\pi)|}\E\left[\sum_{i\in\cH_1(\pi)}\mathbb{I}\{i\in\what{\cH}_1(\pi)\}\right]\\
&=\frac{1}{|\cH_1(\pi)|}\sum_{i\in\cH_1(\pi)}\Pro(i\in\what{\cH}_1(\pi))
\geq\frac{1}{|\cH_1(\pi)|}\sum_{i\in\cH_1(\pi)}\Pro\left(P_{i}^{\normalfont{\textsf{m}}}\leq\frac{\alpha}{n}\right).
\end{align}
Therefore, it is sufficient to show that $\lim_{n,T\to\infty}\Pro(P_{i}^{\normalfont{\textsf{m}}}\leq\alpha/n)=1$ for any $i\in\cH_1(\pi)$.

For $S\sim Bin(T,\pi)$ with the probability function, 
\begin{align}
f_S(t) = \binom{T}{t}\pi^t(1-\pi)^{T-t},
\end{align}
the upper $\alpha/n$-quantile is defined as 
\begin{align}
q_{nT}=\min \left\{x\in\mathbb{R}:\Pro(S\geq x)\leq\frac{\alpha}{n} \right\}
= \min \left\{x\in\mathbb{R}:\sum_{t=x}^T f_S(t)\leq\frac{\alpha}{n} \right\}. \label{q}
\end{align}
Corollary 1 in \cite{short2023binomial} with the condition $1-\alpha/n>1/2$ and a standard estimate of the Gaussian quantile \citep{fung2018quantile} yield
\begin{align}
\frac{q_{nT}}{T} 
&\leq \frac{1}{T}\left\lceil T\pi+c_U\sqrt{T}\Phi^{-1}\left(1-\frac{\alpha}{n}\right)+c_U\left\{\Phi^{-1}\left(1-\frac{\alpha}{n}\right)\right\}^2\right\rceil \\
&\leq \pi+\frac{c_U}{\sqrt{T}}\Phi^{-1}\left(1-\frac{\alpha}{n}\right)+\frac{c_U}{T}\left\{\Phi^{-1}\left(1-\frac{\alpha}{n}\right)\right\}^2+\frac{1}{T} \\
&= \pi + O\left(\sqrt{\frac{\log n}{T}}\right)\label{q/T}
\end{align}
for some constant $c_U>0$. 
Thus, we obtain
\begin{align}
\Pro\left(P_{i}^{\normalfont{\textsf{m}}}\leq\frac{\alpha}{n}\right)
&=\Pro\left(\sum_{t=S_i}^T f_S(t)\leq\frac{\alpha}{n}\right) 
=\Pro\left(S_i\geq q_{nT}\right)
=\Pro\left(\bar{X}_{iT}\geq\frac{q_{nT}}{T}\right) \\
&\geq\Pro\left(\bar{X}_{iT}\geq\pi +  O\left(\sqrt{\frac{\log n}{T}}\right)\right) \\
&\geq \Pro\left(|\bar{X}_{iT}-\pi_i| \leq \pi_i - \pi -  O\left(\sqrt{\frac{\log n}{T}}\right)\right), \label{prob_Pm}
\end{align}
where the second equality follows by \eqref{q}. 
For any $i\in\cH_1(\pi)$, the last probability tends to one since the law of large numbers holds by the assumption and $\pi_i - \pi\geq \delta$ for some $\delta>0$. 

Next we consider the power for the BY (i.e., BH with $\{P_{i}^{\normalfont{\textsf{m}}}h_n\}$). In this case, it is sufficient to prove 
$\Pro\left(P_{i}^{\normalfont{\textsf{m}}}\leq \alpha/(nh_n)\right)\to 1$ for all $i\in\cH_1(\pi)$. This only changes the proof to consider $\alpha/(nh_n)$-quantile, denoted by $q_{nT}'$, instead of $\alpha/n$-quantile $q_{nT}$. Then we have
\begin{align}
\frac{q_{nT}'}{T} 
\leq \frac{1}{T}\left\lceil T\pi+c_U\sqrt{T}\Phi^{-1}\left(1-\frac{\alpha}{nh_n}\right)+c_U\left\{\Phi^{-1}\left(1-\frac{\alpha}{nh_n}\right)\right\}^2\right\rceil, 
\end{align}
but, recalling $h_n=\sum_{k=1}^n1/k=\log n +O(1)$, we find this upper bound is the same as \eqref{q/T} since $\Phi^{-1}(1-\alpha/(nh_n))=O(\sqrt{\log(n\log n)})=O(\sqrt{\log n})$. 
Therefore, we obtain the same asymptotic result as \eqref{prob_Pm}. This completes the proof. 
\end{proof}

\subsection{Proof of Proposition \ref{prop:lambda}}

\begin{proof}[Proof of Proposition \ref{prop:lambda}]
Recall $\lambda_{it}=\min\{0\vee \hat{\lambda}_{it},\bar{\lambda}\}$, where $\hat{\lambda}_{it}=(\bar{X}_{i,t-1}-\pi)/(\pi-\pi^2)$. 
Clearly, we have $\lambda_{it}\in[0,\bar{\lambda}]$ a.s. Thus to complete the proof, it suffices to show that $\lambda_{it}\xrightarrow{p}\min\{\lambda_i^*,\bar{\lambda}\}$, where $\lambda_i^*=(\pi_i-\pi)/(\pi-\pi^2)$.   For any $\epsilon>0$, we have 
\begin{align}
\Pro&\left(|\lambda_{it}-\min\left\{\bar{\lambda},\lambda_i^*\right\}|>\epsilon\right)
=\Pro\left(\left|\min\left\{\bar{\lambda}-\lambda_{it},\lambda_i^*-\lambda_{it}\right\}\right|>\epsilon\right)\\
&\leq\Pro\left(\left|\bar{\lambda}-\min\{0\vee \hat{\lambda}_{it},\bar{\lambda}\}\right|>\epsilon\right)
+\Pro\left(\left|\lambda_i^*-\min\{0\vee \hat{\lambda}_{it},\bar{\lambda}\}\right|>\epsilon\right)\\
&=\Pro\left(\left|\min\left\{-\bar{\lambda}\lor(\hat{\lambda}_{it}-\bar{\lambda}),0\right\}\right|>\epsilon\right)
+\Pro\left(\left|\min\left\{-\lambda_i^*\lor(\hat{\lambda}_{it}-\lambda_i^*),\bar{\lambda}-\lambda_i^*\right\}\right|>\epsilon\right)\\
&\leq\Pro(0>\epsilon)+\Pro\left(\left|\max\left\{-\lambda_i^*,\hat{\lambda}_{it}-\lambda_i^*\right\}\right|>\epsilon\right)\\
&=\Pro\left(\left|-\lambda_i^*\right|>\epsilon,\hat{\lambda}_{it}\leq0\right)+\Pro\left(\left|\hat{\lambda}_{it}-\lambda_i^*\right|>\epsilon,\hat{\lambda}_{it}>0\right)\\
&\leq\Pro\left(\hat{\lambda}_{it}\leq0\right)+\Pro\left(\left|\hat{\lambda}_{it}-\lambda_i^*\right|>\epsilon\right)\\
&\leq\Pro\left(\left|\hat{\lambda}_{it}-\lambda_i^*\right|\geq\lambda_i^*\right)+\Pro\left(\left|\hat{\lambda}_{it}-\lambda_i^*\right|>\epsilon\right)\\
&\to0,
\end{align}
where the convergence follows from Condition \ref{cond:2}. This completes the proof. 
\end{proof}

\bibliographystyle{chicago}
\bibliography{ref}

\begin{thebibliography}{}

\bibitem[\protect\citeauthoryear{Airoldi, Blei, Fienberg, and Xing}{Airoldi et~al.}{2008}]{airoldi2008mixed}
Airoldi, E.~M., D.~Blei, S.~Fienberg, and E.~Xing (2008).
\newblock Mixed membership stochastic blockmodels.
\newblock {\em Advances in neural information processing systems\/}~{\em 21}.

\bibitem[\protect\citeauthoryear{Azzalini}{Azzalini}{1994}]{azzalini1994logistic}
Azzalini, A. (1994).
\newblock Logistic regression and other discrete data models for serially correlated observations.
\newblock {\em Journal of the Italian Statistical Society\/}~{\em 3}, 169--179.

\bibitem[\protect\citeauthoryear{Benjamini and Hochberg}{Benjamini and Hochberg}{1995}]{benjamini1995controlling}
Benjamini, Y. and Y.~Hochberg (1995).
\newblock Controlling the false discovery rate: a practical and powerful approach to multiple testing.
\newblock {\em Journal of the Royal statistical society: series B (Methodological)\/}~{\em 57\/}(1), 289--300.

\bibitem[\protect\citeauthoryear{Benjamini and Yekutieli}{Benjamini and Yekutieli}{2001}]{benjamini2001control}
Benjamini, Y. and D.~Yekutieli (2001).
\newblock The control of the false discovery rate in multiple testing under dependency.
\newblock {\em Annals of statistics\/}, 1165--1188.

\bibitem[\protect\citeauthoryear{Bien and Tibshirani}{Bien and Tibshirani}{2011}]{bien2011sparse}
Bien, J. and R.~J. Tibshirani (2011).
\newblock Sparse estimation of a covariance matrix.
\newblock {\em Biometrika\/}~{\em 98\/}(4), 807--820.

\bibitem[\protect\citeauthoryear{Crane}{Crane}{2015}]{crane2015time}
Crane, H. (2015).
\newblock Time-varying network models.
\newblock {\em Bernoulli\/}~{\em 21\/}(3), 1670--1696.

\bibitem[\protect\citeauthoryear{De~La~Fuente, Bing, Hoeschele, and Mendes}{De~La~Fuente et~al.}{2004}]{de2004discovery}
De~La~Fuente, A., N.~Bing, I.~Hoeschele, and P.~Mendes (2004).
\newblock Discovery of meaningful associations in genomic data using partial correlation coefficients.
\newblock {\em Bioinformatics\/}~{\em 20\/}(18), 3565--3574.

\bibitem[\protect\citeauthoryear{Donoho}{Donoho}{2017}]{donoho201750}
Donoho, D. (2017).
\newblock 50 years of data science.
\newblock {\em Journal of Computational and Graphical Statistics\/}~{\em 26\/}(4), 745--766.

\bibitem[\protect\citeauthoryear{Eu^^c3^^a1n and Sun}{Eu^^c3^^a1n and Sun}{2020}]{EuanSun2020}
Eu^^c3^^a1n, C. and Y.~Sun (2020).
\newblock Bernoulli vector autoregressive model.
\newblock {\em Journal of Multivariate Analysis\/}~{\em 177}, 104599.

\bibitem[\protect\citeauthoryear{Faith, Hayete, Thaden, Mogno, Wierzbowski, Cottarel, Kasif, Collins, and Gardner}{Faith et~al.}{2007}]{faith2007large}
Faith, J.~J., B.~Hayete, J.~T. Thaden, I.~Mogno, J.~Wierzbowski, G.~Cottarel, S.~Kasif, J.~J. Collins, and T.~S. Gardner (2007).
\newblock Large-scale mapping and validation of escherichia coli transcriptional regulation from a compendium of expression profiles.
\newblock {\em PLoS biology\/}~{\em 5\/}(1), e8.

\bibitem[\protect\citeauthoryear{Fan, Jiao, and Wang}{Fan et~al.}{2023}]{fan2023testing}
Fan, Y., Z.~Jiao, and R.~Wang (2023).
\newblock Testing mean and variance by e-processes.
\newblock {\em arXiv preprint arXiv:2301.12480\/}.

\bibitem[\protect\citeauthoryear{Finner, Dickhaus, and Roters}{Finner et~al.}{2009}]{FinnerEtAl2009}
Finner, H., T.~Dickhaus, and M.~Roters (2009).
\newblock On the false discovery rate and an asymptotically optimal rejection curve.
\newblock {\em Annals of Statistics\/}~{\em 37}, 596--618.

\bibitem[\protect\citeauthoryear{Fung and Seneta}{Fung and Seneta}{2018}]{fung2018quantile}
Fung, T. and E.~Seneta (2018).
\newblock Quantile function expansion using regularly varying functions.
\newblock {\em Methodology and Computing in Applied Probability\/}~{\em 20}, 1091--1103.

\bibitem[\protect\citeauthoryear{Gr^^c3^^bcnwald, de~Heide, and Koolen}{Gr^^c3^^bcnwald et~al.}{2024}]{grunwald2020safe}
Gr^^c3^^bcnwald, P., R.~de~Heide, and W.~Koolen (2024).
\newblock {Safe Testing}.
\newblock {\em Journal of the Royal Statistical Society Series B: Statistical Methodology\/}, qkae011.

\bibitem[\protect\citeauthoryear{Hanneke, Fu, and Xing}{Hanneke et~al.}{2010}]{hanneke2010discrete}
Hanneke, S., W.~Fu, and E.~P. Xing (2010).
\newblock {Discrete temporal models of social networks}.
\newblock {\em Electronic Journal of Statistics\/}~{\em 4\/}(none), 585 -- 605.

\bibitem[\protect\citeauthoryear{Henzi and Ziegel}{Henzi and Ziegel}{2022}]{henzi2022valid}
Henzi, A. and J.~F. Ziegel (2022).
\newblock Valid sequential inference on probability forecast performance.
\newblock {\em Biometrika\/}~{\em 109\/}(3), 647--663.

\bibitem[\protect\citeauthoryear{Holland, Laskey, and Leinhardt}{Holland et~al.}{1983}]{holland1983stochastic}
Holland, P.~W., K.~B. Laskey, and S.~Leinhardt (1983).
\newblock Stochastic blockmodels: First steps.
\newblock {\em Social networks\/}~{\em 5\/}(2), 109--137.

\bibitem[\protect\citeauthoryear{Holme and Saram{\"a}ki}{Holme and Saram{\"a}ki}{2012}]{holme2012temporal}
Holme, P. and J.~Saram{\"a}ki (2012).
\newblock Temporal networks.
\newblock {\em Physics reports\/}~{\em 519\/}(3), 97--125.

\bibitem[\protect\citeauthoryear{Karrer and Newman}{Karrer and Newman}{2011}]{karrer2011stochastic}
Karrer, B. and M.~E. Newman (2011).
\newblock Stochastic blockmodels and community structure in networks.
\newblock {\em Physical review E\/}~{\em 83\/}(1), 016107.

\bibitem[\protect\citeauthoryear{Krivitsky and Handcock}{Krivitsky and Handcock}{2014}]{krivitsky2014separable}
Krivitsky, P.~N. and M.~S. Handcock (2014).
\newblock A separable model for dynamic networks.
\newblock {\em Journal of the Royal Statistical Society Series B: Statistical Methodology\/}~{\em 76\/}(1), 29--46.

\bibitem[\protect\citeauthoryear{Lehmann and Romano}{Lehmann and Romano}{2005}]{LehmannRomano2005TSH}
Lehmann, E. and J.~P. Romano (2005).
\newblock {\em Testing Statistical Hypothesis\/} (Third ed.).
\newblock Springer.

\bibitem[\protect\citeauthoryear{Liben-Nowell and Kleinberg}{Liben-Nowell and Kleinberg}{2003}]{liben2003link}
Liben-Nowell, D. and J.~Kleinberg (2003).
\newblock The link prediction problem for social networks.
\newblock In {\em Proceedings of the twelfth international conference on Information and knowledge management}, pp.\  556--559.

\bibitem[\protect\citeauthoryear{Nosek, Ebersole, DeHaven, and Mellor}{Nosek et~al.}{2018}]{nosek2018preregistration}
Nosek, B.~A., C.~R. Ebersole, A.~C. DeHaven, and D.~T. Mellor (2018).
\newblock The preregistration revolution.
\newblock {\em Proceedings of the National Academy of Sciences\/}~{\em 115\/}(11), 2600--2606.

\bibitem[\protect\citeauthoryear{Pandey, Bhanodia, Khamparia, and Pandey}{Pandey et~al.}{2019}]{pandey2019comprehensive}
Pandey, B., P.~K. Bhanodia, A.~Khamparia, and D.~K. Pandey (2019).
\newblock A comprehensive survey of edge prediction in social networks: Techniques, parameters and challenges.
\newblock {\em Expert Systems with Applications\/}~{\em 124}, 164--181.

\bibitem[\protect\citeauthoryear{Panzarasa, Opsahl, and Carley}{Panzarasa et~al.}{2009}]{panzarasa2009patterns}
Panzarasa, P., T.~Opsahl, and K.~M. Carley (2009).
\newblock Patterns and dynamics of users' behavior and interaction: Network analysis of an online community.
\newblock {\em Journal of the American Society for Information Science and Technology\/}~{\em 60\/}(5), 911--932.

\bibitem[\protect\citeauthoryear{Ramdas, Barber, Wainwright, and Jordan}{Ramdas et~al.}{2019}]{RamdasEtAl2019unified}
Ramdas, A.~K., R.~F. Barber, M.~J. Wainwright, and M.~I. Jordan (2019).
\newblock A unified treatment of multiple testing with prior knowledge using the p-filter.
\newblock {\em Annals of Statistics\/}~{\em 47}, 2790--2821.

\bibitem[\protect\citeauthoryear{Short}{Short}{2023}]{short2023binomial}
Short, M. (2023).
\newblock On binomial quantile and proportion bounds: With applications in engineering and informatics.
\newblock {\em Communications in Statistics-Theory and Methods\/}~{\em 52\/}(12), 4183--4199.

\bibitem[\protect\citeauthoryear{Tukey}{Tukey}{1962}]{tukey1962future}
Tukey, J.~W. (1962).
\newblock The future of data analysis.
\newblock {\em The annals of mathematical statistics\/}~{\em 33\/}(1), 1--67.

\bibitem[\protect\citeauthoryear{Vazquez, Racz, Lukacs, and Barabasi}{Vazquez et~al.}{2007}]{vazquez2007impact}
Vazquez, A., B.~Racz, A.~Lukacs, and A.-L. Barabasi (2007).
\newblock Impact of non-poissonian activity patterns on spreading processes.
\newblock {\em Physical review letters\/}~{\em 98\/}(15), 158702.

\bibitem[\protect\citeauthoryear{Vovk and Wang}{Vovk and Wang}{2021}]{vovk2021values}
Vovk, V. and R.~Wang (2021).
\newblock E-values: Calibration, combination and applications.
\newblock {\em The Annals of Statistics\/}~{\em 49\/}(3), 1736--1754.

\bibitem[\protect\citeauthoryear{Wang and Ramdas}{Wang and Ramdas}{2022}]{wang2022false}
Wang, R. and A.~Ramdas (2022).
\newblock False discovery rate control with e-values.
\newblock {\em Journal of the Royal Statistical Society Series B: Statistical Methodology\/}~{\em 84\/}(3), 822--852.

\bibitem[\protect\citeauthoryear{Waudby-Smith and Ramdas}{Waudby-Smith and Ramdas}{2023}]{waudby2023estimating}
Waudby-Smith, I. and A.~Ramdas (2023).
\newblock Estimating means of bounded random variables by betting.
\newblock {\em Journal of the Royal Statistical Society Series B Methodological\/}.

\bibitem[\protect\citeauthoryear{Xu and Ramdas}{Xu and Ramdas}{2023}]{xu2023more}
Xu, Z. and A.~Ramdas (2023).
\newblock More powerful multiple testing under dependence via randomization.
\newblock {\em arXiv preprint arXiv:2305.11126\/}.

\bibitem[\protect\citeauthoryear{Yu and Liu}{Yu and Liu}{2016}]{yu2016sparse}
Yu, G. and Y.~Liu (2016).
\newblock Sparse regression incorporating graphical structure among predictors.
\newblock {\em Journal of the American Statistical Association\/}~{\em 111\/}(514), 707--720.

\end{thebibliography}
\end{document}